\documentclass{amsart}

\usepackage{amsfonts}
 \usepackage{mathabx}
%
\newtheorem{thm}{Theorem}[section]
\newtheorem{lemma}[thm]{Lemma}
\newtheorem{cor}[thm]{Corollary}

\theoremstyle{definition}
\newtheorem{definition}{Definition}

\theoremstyle{remark}
\newtheorem{remark}{Remark}

\numberwithin{equation}{section}


\newcommand{\abs}[1]{\lvert#1\rvert}




\begin{document}

\title{Large deviations of the Lyapunov exponent and localization for the 1D Anderson model}

 \author{Svetlana Jitomirskaya, Xiaowen Zhu}
 \address{Department of Mathematics, University of California, Irvine}

\begin{abstract}
The proof of Anderson localization for the 1D Anderson model with
arbitrary (e.g. Bernoulli) disorder, originally given by
Carmona-Klein-Martinelli in 1987, is based in part on the  multi-scale
analysis.  Later, in the 90s, it was realized that for one-dimensional models with positive Lyapunov exponents some parts of multi-scale analysis can be replaced by considerations involving subharmonicity and large deviation estimates for the corresponding cocycle, leading to nonperturbative proofs for 1D quasiperiodic models.  In this paper we present a short proof along these lines, for the Anderson model.  To prove dynamical localization we also develop a  uniform version of Craig-Simon's bound that works in high generality and may be of independent interest.
  \end{abstract}

\maketitle

\section{Introduction}

Anderson localization for the Anderson model can be proved in several different ways if the common distribution of the i.i.d.r.v's is absolutely continuous.  Without  that condition (or at least some H\"older regularity) it remains an open question for $d\geq 2,$ and the number of approaches that work for $d=1$ also drops dramatically. Such is the situation, for example, for the Bernoulli-Anderson model.  Anderson localization for {\it arbitrary} 1D disorder was first proved in \cite{ckm}. The approach was based on  certain regularity of the Lyapunov exponents coming from the (analysis around) the Furstenberg theorem to obtain an analogue of Wegner's lemma (automatic in the absolutely continuous case). After that  the proof was reduced  to multi-scale analysis, with initial scale coming again from the positive Lyapunov exponent.  Another argument was later presented in \cite{svw}, where an approach to positivity and regularity of the Lyapunov exponent using replica trick was given, again reducing the proof to multi-scale analysis.  Multi-scale analysis is a method that allows to achieve Green's function decay and ultimately localization from high probability of decay at the initial scale.  It works in a variety of settings. Originally developed by Frohlich and Spencer \cite{fs}, it was significantly simplified in \cite{von1989new} but remains somewhat involved. It should be noted  that in the multidimensional case no shortcuts such as Furstenberg theorem or replica trick are available, and the multi-scale analysis is used to reach conclusions analogous to the positivity of the Lyapunov exponent simultaneously with the proof of localization. Yet in the one-dimensional case positivity of the Lyapunov exponent essentially provides the averaged decay statement, thus a large portion of the {\it conclusion} of the multi-scale analysis, making its machinery seem redundant.

A method to effectively exploit  positive Lyapunov exponent for a localization proof based on the analysis of the large deviation set for the Lyapunov exponent was first developed in \cite{j} for the almost Mathieu operator, initiating what was later called a non-perturbative approach, in contrast with earlier proofs based on some form of multi-scale analysis \cite{fsw,sin}. A robust method  based on subharmonic function theory and the theory of semianalytic sets was then developed in \cite{bg} and other papers summarized in \cite{bbook}, to conclude localization from positive Lyapunov exponents for analytic quasiperiodic and some other deterministic potentials. The fact that those ideas can be applicable also to the Anderson model was mentioned in some talks by one of the authors circa 2000, but the details were never developed. One goal of this paper is to obtain a proof of Anderson localization for the 1D Anderson model in the spirit of \cite{j} but with appropriate simplifications due to randomness.

Another proof, also based on large deviations and also avoiding multi-scale analysis was recently developed in \cite{7}. The proof of \cite{7} is based on deterministic  ideas close to the ones in \cite{bs}, which we believe may be somewhat more complicated than needed for the random case. We mention that yet another, purely dynamical,  proof of localization for the 1D Anderson model appears in \cite{gk}.

One ingredient in our simple argument for spectral localization,
Theorem \ref{CS}, is Craig-Simon's upper bound based on subharmonicity of the Lyapunov exponent \cite{cs}, a statement that holds for any ergodic potential. In order to prove dynamical localization  we need a uniform in energy and quantitative version of this statement, that we prove for general ergodic potentials satisfying certain large deviation bounds,  a result that could be of independent interest. We note that our proof does not explicitly use subharmonicity.

The rest of this paper is organized as follows. Section \ref{pre} contains the preliminaries, the statement of the spectral localization result, Theorem \ref{thm1}, and its quick reduction to Theorem \ref{thm2}.  We then prove the preparatory Lemmas \ref{lemma1}, \ref{omega1}, \ref{omega3}, and Corollary \ref{omega2} in Section \ref{alllemma}. Then we complete the proof of Theorem \ref{thm2} in Section \ref{pf}. Our proof effectively establishes a more precise result, Theorem \ref{thm22}, which in turn immediately implies the Lyapunov behavior at all eigenvalues, Theorem \ref{CS2}. We formulate and prove the general uniform Craig-Simon-type statement in Section \ref{uniformcs}, and use it in Section \ref{dyn} to prove dynamical localization.
\section{Preliminaries}\label{pre}

The one dimensional Anderson model is given by a discrete
Schr\"odinger operators $H_{\omega}$
\begin{equation}
  (H_{\omega}\Psi)(n)=\Psi(n+1)+\Psi(n-1)+\omega_n\Psi(n),
\end{equation}
where $\omega_n\in\mathbb{R}$ are  independent identically distributed random variables with common
Borel probability distribution $\mu.$ We will assume that $S\subset\mathbb{R}$, the topological
support of $\mu,$ is  compact , and contains at least two points. We will
denote the probability space $\Omega=S^{\mathbb{Z}},$ with elements
$\{\omega_n\}_{n\in\mathbb{Z}}\in\Omega$.
Denote $\mu^{\mathbb{Z}}$ as $\mathbb{P}$. Let $\mathbb{P}_{[a,b]}$ be
$\mu^{[a,b]\cap \mathbb{Z}}$ on $S^{[a,b]\cap \mathbb{Z}}$. Aldo let
$T$ be the shift  $T\omega_i=\omega_{i-1}$.  Finally, we denote
Lebesgue measure on $\mathbb{R}$ by $m$. 
We say that $H_\omega$ has  spectral localization  in $I$ if for
a.e. $\omega$, $H_\omega$ has only pure point spectrum in $I$ and its
eigenfunctions $\Psi(n)$ decay exponentially in $n$.
  \begin{definition}  We call $E$ a generalized eigenvalue ($g.e.$), if there exists a nonzero polynomially bounded function $\Psi(n)$ such that $H_\omega\Psi=E\Psi$. We call $\Psi(n)$ a generalized eigenfunction.
\end{definition}
Since the set of g.e. supports the spectral measure of $H_\omega$
(e.g. \cite{cycon}), we only need to show:
\begin{thm}\label{thm1}
  For a.e. $\omega$, for every g.e. $E,$ the corresponding generalized eigenfunction $\Psi_{\omega,E}(n)$ decays exponentially in $n$.
\end{thm}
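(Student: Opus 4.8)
The plan is to prove that almost surely every generalized eigenfunction decays exponentially, by exploiting the positivity of the Lyapunov exponent $L(E)$ together with large deviation estimates for the transfer matrices, following the non-perturbative paradigm of \cite{j,bg}. Throughout, for a generalized eigenfunction $\Psi$ with $H_\omega\Psi=E\Psi$, the vector $\binom{\Psi(n+1)}{\Psi(n)}$ is obtained from $\binom{\Psi(1)}{\Psi(0)}$ by applying the $n$-step transfer matrix $A_n(\omega,E)$, a product of $\mathrm{SL}(2,\mathbb{R})$ matrices whose entries are polynomials in $E$ and linear in the $\omega_j$. By Furstenberg's theorem (as used in \cite{ckm,svw}), $L(E)>0$ for every $E$, and moreover $L(E)$ is continuous; the key probabilistic input will be a large deviation estimate asserting that for each fixed $E$ (and uniformly for $E$ in a compact set), $\mathbb{P}\big(\big|\tfrac1n\log\|A_n(\omega,E)\| - L(E)\big|>\varepsilon\big)$ is small, say smaller than $e^{-cn}$ or at least subexponentially small after iteration/averaging, for $n$ large.

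The main steps I would carry out are as follows. First, reduce to controlling $\Psi$ on a fixed interval, say $[-N,N]$, using the fact (standard from the Schnol/Simon argument, already invoked in the excerpt via \cite{cycon}) that generalized eigenfunctions are polynomially bounded, hence cannot grow but also, a priori, cannot be assumed to decay — so the real content is ruling out the ``resonant'' behavior where $\Psi$ fails to decay. Second, for a g.e. $E$ and a site $n$, write the Poisson-type / three-term formula expressing $\Psi(n)$ in terms of boundary values via transfer matrices over an interval $[n-k,n+k]$ and Green's functions of the finite-volume restriction $H_{[n-k,n+k]}$; the decay of the finite-volume Green's function $|G_{[n-k,n+k]}(x,y;E)|$ is governed by $e^{-L(E)|x-y|}$ up to the large-deviation error, provided $E$ is not too close to an eigenvalue of the box, i.e. provided $\|(H_{[n-k,n+k]}-E)^{-1}\|$ is not too large. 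Third, establish the crucial dichotomy: at scale $k$, for a fixed interval (say centered at the origin), either the box is ``good'' (resolvent bounded, Green's function decays) with overwhelming probability, or it is ``bad''; a Borel–Cantelli argument over the scales $k=2^\ell$, combined with the large deviation bound and the continuity of $L$ in $E$ (to handle the continuum of energies via an $\varepsilon$-net and the polynomial bound on $\Psi$), shows that almost surely only finitely many scales are bad. Fourth, chain the good-box estimates: if $[n-k,n+k]$ is good for a suitable $k\sim |n|/C$, the Poisson formula plus the polynomial a priori bound on $\Psi$ at the boundary forces $|\Psi(n)|\le e^{-L(E)|n|/2}\cdot\mathrm{poly}(|n|)$, hence exponential decay. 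The paper's structure (Lemmas \ref{lemma1}, \ref{omega1}, \ref{omega3}, Corollary \ref{omega2}, then Theorems \ref{thm2}, \ref{thm22}) suggests exactly this skeleton: one lemma for the large deviation / subharmonicity-type input, one or two lemmas isolating the ``bad set'' in $\omega$ at each scale, and a Borel–Cantelli corollary.

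The subtle point — where I expect the main obstacle to lie — is handling \emph{all} generalized eigenvalues $E$ \emph{simultaneously} for a single full-measure set of $\omega$. A naive union bound over a continuum of $E$ fails; the standard fix (Gordon-type or, here more aptly, the elimination-of-resonances argument à la \cite{j,bg}) is to use that if $E$ is a g.e. with eigenfunction $\Psi$ then $E$ must be a ``pseudo-eigenvalue'' of every box where $\Psi$ is not small, which forces $\|(H_{[n-k,n+k]}-E)^{-1}\|$ to be large there; but having two \emph{disjoint} boxes both with large resolvent at the same $E$ is a codimension-two (hence extremely improbable) event, controlled by semialgebraic/polynomial counting of the number of eigenvalues of a box in a small interval together with the large deviation estimate. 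Quantifying ``two distant boxes cannot both be resonant'' is precisely the place where randomness simplifies matters relative to \cite{bg}: instead of the theory of semialgebraic sets one can use independence of the disorder in disjoint boxes, so the joint bad event factorizes and one gets the needed smallness directly. I would therefore spend most of the effort on: (i) proving the uniform-in-$E$ large deviation estimate with the right (at least stretched-exponential) rate, and (ii) the resonance-elimination step that converts ``$\Psi$ doesn't decay'' into ``two independent low-probability events both occur,'' from which Borel–Cantelli closes the argument.
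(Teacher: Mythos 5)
Your proposal captures the paper's approach: exploit the positive Lyapunov exponent and large deviation estimates, convert failure of decay of a generalized eigenfunction into a simultaneous resonance on two \emph{disjoint} boxes, and then use independence of the disorder restricted to those boxes (instead of the semialgebraic-set machinery of \cite{bg}) to show that joint event has summable probability — this is precisely Lemma \ref{omega3} and the contradiction argument in the proof of Theorem \ref{thm2}, and you correctly identify it as the simplification randomness buys over the quasiperiodic case.

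One step you gloss over, which the paper handles via Theorem \ref{CS} and Corollary \ref{omega2}: singularity of a site could a priori arise from the numerator $\lvert P_{[a,x-1]}P_{[y+1,b]}\rvert$ in \eqref{A} being too large (i.e.\ $\tilde E\in B^+$), not only from the denominator being small ($\tilde E\in B^-$). The Craig--Simon upper bound eliminates the $B^+$ alternative for every $E$ simultaneously, a.s., and that is what forces $\tilde E$ into the set $B^-_{[n+1,3n+1],\tilde\omega}$ — a union of at most $2n+1$ intervals of small total Lebesgue measure around the eigenvalues of $H_{[n+1,3n+1],\tilde\omega}$ (Lemma \ref{omega1}) — and hence close to a concrete finite-volume eigenvalue, to which the independence argument of Lemma \ref{omega3} can then be applied. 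Related to this: the paper does not actually need the $\varepsilon$-net over energies you mention; membership of $\tilde E$ in $B^-$ automatically localizes $\tilde E$ near one of finitely many random points, so the continuum in $E$ is handled structurally rather than by discretization. Without some replacement for the Craig--Simon input, your sketch leaves the $B^+$ sets uncontrolled and the reduction to ``$\tilde E$ is near a box eigenvalue'' does not close.
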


For $[a,b]$ an interval, $a,b\in\mathbb{Z}$, define $H_{[a,b],\omega}$ to be operator $H_\omega$ resticted to $[a,b]$ with zero boundary conditions outside $[a,b]$. Note that it can be expressed as a "$b-a+1$"-dimensional matrix.
The Green's function for $H_\omega$ restricted to $[a,b]$ with energy $E\notin\sigma_{[a,b],\omega}$ is
  \[
    G_{[a,b],E,\omega}=(H_{[a,b],\omega}-E)^{-1}
  \]
Note that this can also be expressed as a "$b-a+1$"-dimensional matrix. Denote its $(x,y)$ entry as $G_{[a,b],E,\omega}(x,y)$.

It is well known that
  \begin{equation}\label{possion}
    \Psi(x)=-G_{[a,b],E,\omega}(x,a)\Psi(a-1)-G_{[a,b],E,\omega}(x,b)\Psi(b+1),\quad x\in[a,b]
  \end{equation}
and we have
\begin{equation}\label{sigma}
\sigma:=\sigma(H_{\omega})=[-2,2]+S\quad a.e. \omega.
\end{equation}



\begin{definition}
   For $c>0, n\in\mathbb{Z}$, we say $x\in\mathbb{Z}~$ is $(c,n,E,\omega)$-regular, if
  \[
    G_{[x-n,x+n],E,\omega}(x,x-n) \leq e^{-cn}
  \]
  \[
    G_{[x-n,x+n],E,\omega}(x,x+n) \leq e^{-cn}
  \]
  Otherwise, we call it $(c,n,E,\omega)$-singular.
\end{definition}

By \eqref{possion} and definition 2, Theorem  \ref{thm1} follows from
\begin{thm}\label{thm2}
  There exists $ \Omega_0$ with $\mathbb{P}(\Omega_0)=1$, such that
  for every $ \tilde{\omega}\in\Omega_0$, for any g.e. $\tilde{E}$ of
  $H_{\tilde{\omega}}$, there exist $
  N=N(\tilde{E},\tilde{\omega}),C=C(\tilde{E})$, such that for every $ n>N$, $2n,~2n+1$ are $(C,n,\tilde{E},\tilde{\omega})$-regular.
\end{thm}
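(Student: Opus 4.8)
\medskip
\noindent\textbf{Plan of proof.} The site $2n+1$ in $[n+1,3n+1]$ is handled exactly as $2n$ in $[n,3n]$, so we focus on the latter. Write $P_{[a,b]}(E)=\det(H_{[a,b],\omega}-E)$ and let $M_{[a,b]}(E,\omega)$ be the associated transfer matrix; then $\det M_{[a,b]}=1$, the norm $\|M_{[a,b]}(E,\omega)\|$ is comparable to the maximum of $|P_{[a,b]}|$, $|P_{[a+1,b]}|$, $|P_{[a,b-1]}|$, $|P_{[a+1,b-1]}|$ at $(E,\omega)$, and Cramer's rule gives $G_{[a,b],E,\omega}(x,y)=P_{[a,x-1]}(E)P_{[y+1,b]}(E)/P_{[a,b]}(E)$ for $x\le y$, empty products being $1$. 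Hence the two corner entries of $G_{[n,3n],\tilde E,\tilde\omega}$ at $2n$ equal $P_{[2n+1,3n]}(\tilde E)/P_{[n,3n]}(\tilde E)$ and $P_{[n,2n-1]}(\tilde E)/P_{[n,3n]}(\tilde E)$, and the site $2n$ is $(C,n,\tilde E,\tilde\omega)$-regular once
\[
\max\bigl(|P_{[2n+1,3n]}(\tilde E)|,\,|P_{[n,2n-1]}(\tilde E)|\bigr)\le e^{(L(\tilde E)+\epsilon)n}
\quad\text{and}\quad
|P_{[n,3n]}(\tilde E)|\ge e^{(L(\tilde E)+\delta)n}
\]
for some $\delta>\epsilon>0$, in which case one may take $C(\tilde E)=\delta-\epsilon>0$; here $L$ is the Lyapunov exponent and $L(\tilde E)>0$ by Furstenberg's theorem since $S$ contains at least two points. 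The upper bound is the easy, $\omega$-almost-sure and $E$-uniform one: subharmonicity of $E\mapsto\frac1N\log\|M_N(E,\cdot)\|$ and the maximum principle give, for a.e.\ $\omega$ and all large $N$, $\|M_N(E,\omega)\|\le e^{(L(E)+\epsilon)N}$ for \emph{every} $E$ in the (compact) spectrum simultaneously --- this is Craig--Simon's bound, Theorem \ref{CS} --- and with $N=n-1$ this controls both numerators, giving the first inequality for all large $n$. This is the content of Lemmas \ref{lemma1} and \ref{omega1} and Corollary \ref{omega2}, and produces a full-measure set $\Omega_1$.

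It remains to produce the lower bound on the denominator, i.e.\ to show that for a.e.\ $\omega$ the box $[n,3n]$ is, for all large $n$, not in resonance with any generalized eigenvalue $\tilde E$ of $H_\omega$; this is where the generalized-eigenfunction structure and the independence built into the Anderson model are used, and is the step I expect to be the crux. The first observation is that a generalized eigenvalue automatically makes the box \emph{about the origin} singular: normalizing the g.e.f.\ $\Psi$ so that $(\Psi(0),\Psi(-1))$ is a unit vector and using that $\Psi$ is polynomially bounded, the Poisson formula \eqref{possion} on $[-(n-1),n-1]$ shows that if that box were $(c,n-1,\tilde E,\tilde\omega)$-regular then $\max(|\Psi(0)|,|\Psi(-1)|)\le e^{-c(n-1)}\bigl(|\Psi(-n)|+|\Psi(n)|\bigr)\to0$, impossible; so the origin box is singular at $\tilde E$ for all large $n$, with no circularity, as only singularity is deduced. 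The second observation is that, for boxes of length of order $n$ and $c$ slightly below $L$, ``singular at $E$'' forces $E$ to lie within $e^{-c'n}$ of the spectrum of the box: unfolding $G$ via the ratio of characteristic polynomials and using both the $E$-uniform upper bound above and the (a.e.) absence of anomalously tight clustering of finite-volume eigenvalues --- which for Bernoulli disorder rests on the Wegner-type regularity obtained from the Furstenberg theorem in \cite{ckm} --- a singular box must carry an eigenvalue exponentially close to $E$.

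Putting these together: if, for some generalized eigenvalue $\tilde E$, the site $2n$ were singular, then $H_{[-(n-1),n-1],\tilde\omega}$ and $H_{[n,3n],\tilde\omega}$ would have eigenvalues within $e^{-c'n}$ of one another. But these two operators depend on \emph{disjoint}, hence independent, parts of $\omega$, so --- conditioning on the first and applying the Wegner-type estimate to the $O(n)$ candidate pairs --- this event has probability at most $n^{2}e^{-c''n}$; it is precisely this independence, special to the Anderson model, that replaces the semialgebraic elimination of double resonances used in the quasiperiodic setting of \cite{j}. Summing over $n$ and invoking Borel--Cantelli gives a full-measure set $\Omega_3$ --- the content of Lemma \ref{omega3} --- off which, for \emph{every} generalized eigenvalue $\tilde E$ and all large $n$, $H_{[n,3n],\tilde\omega}$ has no eigenvalue within $e^{-c'n}$ of $\tilde E$; hence $|P_{[n,3n]}(\tilde E)|\ge e^{(L(\tilde E)+\delta)n}$, and on coarsening the scale $e^{-c'n}$ to $e^{-\epsilon n}$ one even obtains $|P_{[n,3n]}(\tilde E)|\ge e^{(2L(\tilde E)-\epsilon)n}$.

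Finally, take $\Omega_0=\Omega_1\cap\Omega_3$. For $\omega\in\Omega_0$, a generalized eigenvalue $\tilde E$, and $n$ beyond the thresholds of the previous paragraphs, both corner entries of $G_{[n,3n],\tilde E,\tilde\omega}$ at $2n$ are at most a constant times $e^{(L(\tilde E)+\epsilon)n-(L(\tilde E)+\delta)n}\le e^{-C(\tilde E)n}$, and identically for $2n+1$ in $[n+1,3n+1]$; this is Theorem \ref{thm2}. The sharper denominator bound yields the more precise Theorem \ref{thm22}, hence the decay of the eigenfunctions at the Lyapunov rate, Theorem \ref{CS2}; and Theorem \ref{thm1}, i.e.\ spectral localization, follows by feeding the regular sites back into \eqref{possion} and iterating. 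The main obstacle remains the denominator bound at the $\omega$-dependent energy $\tilde E$, where $E$-uniform (subharmonicity) arguments are useless: one must exploit that $\tilde E$, being a generalized eigenvalue, is resonant with the origin yet --- by independence of disjoint blocks --- cannot simultaneously be resonant with the far box $[n,3n]$.
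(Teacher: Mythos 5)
Your plan captures the same core mechanism the paper uses: a generalized eigenvalue forces singularity of the box around the origin, singularity of the far box would create a double resonance between two independent halves of $\omega$, and this is excluded almost surely by Borel--Cantelli; the numerators are controlled by the Craig--Simon upper bound. Where you diverge is in how the double resonance is killed. You condition on the far box and apply a Wegner-type estimate to the near box, invoking the Furstenberg/Le~Page regularity machinery of \cite{ckm}; the paper deliberately avoids any Wegner input and works only with the large-deviation bound of Lemma~\ref{ldt lemma}. Concretely, Lemma~\ref{omega3} shows, by independence plus LDT plus Borel--Cantelli, that the $O(n)$ eigenvalues $E_j$ of $H_{[n+1,3n+1]}$ almost surely avoid the LDT bad sets $B_{[-n,y_1]}$, $B_{[y_2,n]}$ of sub-intervals of the near box; then a purely deterministic Green's-function expansion of $\|G_{[-n,n],E_j}\|$, split into cases according to whether $y_1,y_2$ sit near the endpoints, produces the contradiction. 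You also attribute the implication ``singular $\Rightarrow$ $\tilde E$ exponentially close to a finite-volume eigenvalue'' to Wegner-type control of eigenvalue clustering, but in the paper this follows much more cheaply from Lemma~\ref{omega1} (the Chebyshev/Borel--Cantelli measure bound $m(B^-_{[n+1,3n+1],\omega})\le e^{-(\eta_0-\delta_0)(2n+1)}$) together with the elementary observation that the sublevel set of a degree-$(2n+1)$ real-rooted polynomial has at most $2n+1$ components, each containing a zero; note also that your statement that eigenvalue separation alone yields the lower bound on $|P_{[n,3n]}(\tilde E)|$ needs exactly this measure/component argument to be justified, not just the separation. Both routes give the theorem, but yours imports the Wegner estimate, which for singular disorder is itself a nontrivial consequence of the same LDT machinery, whereas the paper's argument is self-contained with LDT as the only probabilistic input --- which is rather the point of the paper. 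Lastly, a small bookkeeping issue: the $E$-uniform numerator bound is Corollary~\ref{omega2} (i.e.\ the set $\Omega_2$), not Lemmas~\ref{lemma1} or \ref{omega1}, which play the singularity-to-bad-set and measure-bound roles respectively.
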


Some other standard basic settings are below. Denote
\[
  P_{[a,b],E,\omega}=det(H_{[a,b],E,\omega}-E), a\leq b
\]
If $a>b$, let $  P_{[a,b],E,\omega}=1$. Then
\begin{equation}\label{A}
  \left\vert G_{[a,b],E,\omega}(x,y)\right\vert=\frac{\left\vert P_{[a,x-1],E,\omega}P_{[y+1,b],E,\omega}\right\vert}{\left\vert P_{[a,b],E,\omega}\right\vert},\quad x\leq y
\end{equation}
If we denote the transfer matrix $T_{[a,b],E,\omega}$ as the matrix such that
\[
\left(
\begin{array}{c}
  \Psi(b)\\
  \Psi(b-1)\\
\end{array}
\right)
=T_{[a,b],E,\omega} \left(
\begin{array}{c}
  \Psi(a)\\
  \Psi(a-1)\\
\end{array}
\right)
\]
where $\Psi$ solves $H_\omega\Psi=E\Psi,$ then
\[
T_{[a,b],E,\omega}=\left(
  \begin{array}{cc}
    P_{[a,b],E,\omega} & -P_{[a+1,b],E,\omega}\\
    P_{[a,b-1],E,\omega} & -P_{[a+1,b-1],E,\omega}\\
  \end{array}
  \right)
\]
The Lyapunov exponent exists by Kingman's subadditive ergodic theorem and is given by
  \begin{equation}\label{gamma}
    \gamma(E)=\lim_{n\to\infty}\frac{1}{n}\int_0^1 \log\Vert
    T_{[0,n],E,\omega}\Vert
    d\mathbb{P}(\omega)=\lim_{n\rightarrow\infty}\frac{1}{n} \log\Vert
    T_{[0,n],E,\omega}\Vert, \quad a.e. \omega.
\end{equation}

Let $\nu=\inf\limits_{E\in \sigma}\gamma(E)$. By the Furstenberg's
theorem $\nu>0.$ It follows from (\ref{A}) that the desired
  exponential decay of the Green's function can be achieved if all the $P_{[a,b]}$ in  (\ref{A}) behave as $e^{(b-a)\gamma(E)}$, thus leading
    to the study of deviations of $\ln P_{[a,b]} $ from its mean. In
    fact, the key estimates underlying the analysis of \cite{ckm}
are precisely  large deviation bounds for the Lyapunov exponent due to
Le Page \cite{lepage}. Here we will use a corresponding statement for
the matrix elements \cite{tsay1999some}

\begin{lemma}[ ''uniform-LDT'']\label{ldt lemma}
  For any $\epsilon>0$, there exists $\eta=\eta(\epsilon)>0$ such that, there exists $N_0=N_0(\epsilon)$, such that for every $ b-a>N_0$, and any $E$ in a compact set,
  \begin{equation}\label{eqLDT}
  \mathbb{P}\left\{ \omega:\left\vert \frac{1}{b-a+1} \log\Vert P_{[a,b],E,\omega}\Vert-\gamma(E) \right\vert\geq\epsilon
   \right\} \leq e^{-\eta (b-a+1)}
 \end{equation}
\end{lemma}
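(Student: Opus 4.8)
The plan is to obtain this from Le Page's large deviation theorem for i.i.d.\ products in $SL(2,\mathbb{R})$ \cite{lepage}, in the parameter-uniform form for matrix elements due to Tsay \cite{tsay1999some}; for context I sketch the route. The transfer matrices $M_E(\omega_0)$ (the one-step pieces of $T_{[a,b],E,\omega}$) are i.i.d.\ with a common compactly supported law, and because $S$ has at least two points, for every real $E$ the closed subgroup of $SL(2,\mathbb{R})$ they generate is noncompact (it contains the nontrivial parabolic $M_E(a)M_E(b)^{-1}$ for $a\neq b$ in $S$) and strongly irreducible (see \cite{ckm,svw}); Furstenberg's theorem then gives $\gamma(E)>0$, and by compactness $\gamma$ is bounded below on the given compact set of energies. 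The strategy is to study, for such $E$ and for $s$ in a fixed neighborhood of $0$, the twisted Markov operator acting on $\beta$-H\"older functions on the projective line,
\[
(\mathcal{L}_{E,s}f)(\xi)=\int \frac{\Vert M_E(\omega_0)v\Vert^{\,s}}{\Vert v\Vert^{\,s}}\,f\big(M_E(\omega_0)\cdot\xi\big)\,d\mu(\omega_0),\qquad \xi=[v]\in\mathbb{P}^1,
\]
and to read off \eqref{eqLDT} from its spectral theory.

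Since $S$ is compact, every exponential moment of $\log\Vert M_E(\omega_0)\Vert$ is finite, so $\mathcal{L}_{E,s}$ is a bounded operator on the H\"older space for all real $s$. Le Page's theorem gives, for $|s|$ small, a spectral gap: a simple leading eigenvalue $\lambda(E,s)>0$ isolated from the rest of the spectrum, with $\lambda(E,0)=1$, $s\mapsto\log\lambda(E,s)$ analytic, $\partial_s\log\lambda(E,s)|_{s=0}=\gamma(E)$, and $\partial_s^2\log\lambda(E,s)|_{s=0}>0$ (positivity of the asymptotic variance, again using $|S|\ge2$). The crucial point for the lemma is \emph{uniformity in $E$}: $E\mapsto M_E(\cdot)$, hence $E\mapsto\mathcal{L}_{E,s}$, is continuous in the operator norm of the H\"older space, and since the Furstenberg conditions hold at every real $E$ there is no degeneration; a compactness argument then makes the gap, the leading eigenvalue, the associated rank-one projection, and the second-order remainder of $\log\lambda(E,\cdot)$ all uniform for $E$ in the compact set and $|s|\le s_0$. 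The same inputs give a uniform H\"older regularity bound $\nu_E\big(B(\xi,r)\big)\le Cr^{\alpha}$ for the unique stationary measure $\nu_E$ on $\mathbb{P}^1$.

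Given the uniform gap one runs the standard Chernoff argument. Put $n=b-a+1$, let $e_1$ be the first coordinate vector (so that $\langle e_1,T_{[a,b],E,\omega}\,e_1\rangle=P_{[a,b],E,\omega}$, by the matrix form of $T_{[a,b],E,\omega}$ recorded above), and set $u=T_{[a,b],E,\omega}\,e_1/\Vert T_{[a,b],E,\omega}\,e_1\Vert$. The gap yields $\mathbb{E}\big[\Vert T_{[a,b],E,\omega}\,e_1\Vert^{\,s}\big]=c(E,s)\,\lambda(E,s)^{\,n}\big(1+O(\rho^{\,n})\big)$ with $0<\rho<1$ and $c(E,s)>0$ uniform in $E$ (for $s<0$ this expectation is finite because $\Vert T_{[a,b],E,\omega}\,e_1\Vert\ge\Vert T_{[a,b],E,\omega}\Vert^{-1}$ and $\mu$ has compact support). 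Hence by Markov's inequality, for $0<s\le s_0$, $\mathbb{P}\{\tfrac1n\log\Vert T_{[a,b],E,\omega}\,e_1\Vert\ge\gamma(E)+\epsilon\}\le e^{-sn(\gamma(E)+\epsilon)}\,\mathbb{E}[\Vert T_{[a,b],E,\omega}\,e_1\Vert^{\,s}]\le C\,e^{-n(s\epsilon-Cs^2)}$, and optimizing over small $s$ gives an exponential rate bounded below by a positive constant times $\epsilon^2$, uniform in $E$. Since $|P_{[a,b],E,\omega}|\le\Vert T_{[a,b],E,\omega}\Vert$, the same argument with $\Vert T_{[a,b],E,\omega}\Vert$ in place of $\Vert T_{[a,b],E,\omega}\,e_1\Vert$ controls the upper tail $\mathbb{P}\{\tfrac1n\log|P_{[a,b],E,\omega}|\ge\gamma(E)+\epsilon\}$. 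Running the Chernoff bound with $-s_0\le s<0$ gives $\mathbb{P}\{\tfrac1n\log\Vert T_{[a,b],E,\omega}\,e_1\Vert\le\gamma(E)-\epsilon\}\le e^{-\eta n}$. To pass to the lower tail of $|P_{[a,b],E,\omega}|=|\langle e_1,T_{[a,b],E,\omega}\,e_1\rangle|$ one writes $\log|P_{[a,b],E,\omega}|=\log\Vert T_{[a,b],E,\omega}\,e_1\Vert+\log|\langle e_1,u\rangle|$ and notes that, off an event of exponentially small probability, $|\langle e_1,u\rangle|\ge e^{-\delta n}$ for any prescribed $\delta>0$: this is where the H\"older regularity of $\nu_E$, together with the exponentially fast approach of the law of $u$ to $\nu_E$, enters. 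Combining the two tails and absorbing $\delta$ into $\epsilon$ yields \eqref{eqLDT}.

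The main obstacle is not any individual estimate but the \emph{uniformity in $E$} of all of them — the spectral gap of $\mathcal{L}_{E,s}$, the constants $c(E,s)$, the contraction rate $\rho$, and the H\"older exponent and constant for $\nu_E$. What makes this work is that the Furstenberg conditions hold at every real energy, never merely generically, so the only failure mode is a loss of uniformity along a sequence $E_k$ in the compact set, and that is ruled out by continuity of $E\mapsto\mathcal{L}_{E,s}$ together with upper semicontinuity of the spectral radius of the complementary part. This is precisely what is carried out in \cite{tsay1999some}, refining \cite{lepage}, so we simply quote it.
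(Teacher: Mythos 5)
The paper offers no proof of this lemma, instead quoting it from \cite{tsay1999some} (a uniform-in-$E$ refinement of Le Page's large-deviation estimate \cite{lepage} for i.i.d.\ $SL(2,\mathbb{R})$ cocycles), and your proposal does exactly the same thing at its conclusion. Your sketch of the underlying spectral-gap argument for the twisted transfer operator---including the points that matter for uniformity in $E$ (Furstenberg's conditions hold at every real energy, continuity of $E\mapsto\mathcal{L}_{E,s}$ in H\"older operator norm, and H\"older regularity of the stationary measure to control the lower tail of the matrix element $P_{[a,b],E,\omega}=\langle e_1,T_{[a,b],E,\omega}e_1\rangle$)---is an accurate account of what that reference establishes, so this is essentially the same approach.
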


It will also be convenient to use the general subharmonicity upper
bound due to Craig-Simon \cite{cs}

\begin{thm}[Craig-Simon \cite{cs}]\label{CS1}
  For a.e. $\omega$ {\it for all} $E$, we have
  \begin{equation}
   \varlimsup_{n\to\infty} \frac{\log\Vert T_{[0,n],E,\omega}\Vert}{n+1}\leq\gamma(E)
  \end{equation}

\end{thm}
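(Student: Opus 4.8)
This is the classical Craig-Simon bound, and the plan is to follow its original argument: upgrade the almost-everywhere-in-$E$ convergence $\frac{1}{n+1}\log\Vert T_{[0,n],E,\omega}\Vert\to\gamma(E)$ — which comes for free from Kingman's subadditive ergodic theorem and Fubini — to a ``for all $E$'' upper bound, using subharmonicity in the complexified energy variable.

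First I would complexify: for fixed $\omega$ and $z\in\mathbb{C}$ put $u_n(z,\omega)=\frac{1}{n+1}\log\Vert T_{[0,n],z,\omega}\Vert$. The entries of $T_{[0,n],z,\omega}$ are polynomials in $z$, so $z\mapsto u_n(z,\omega)$ is subharmonic on $\mathbb{C}$. Since each one-step factor has norm at most $|z|+R+2$, with $R=\sup_{s\in S}|s|$, we obtain an $n$-independent, locally bounded majorant $u_n(z,\omega)\le h(z):=\log(|z|+R+2)$ (and $u_n\ge 0$ because $\det T_{[0,n],z,\omega}=1$). For each fixed $z$ the sequence $\log\Vert T_{[0,n],z,\omega}\Vert$ is subadditive along the shift $T$, so Kingman's theorem gives $u_n(z,\omega)\to\gamma(z)$ for $\mathbb{P}$-a.e.\ $\omega$, where $\gamma(z)$ is the evident extension of the Lyapunov exponent; applying Fubini on $(\mathbb{C}\times\Omega,\,m\times\mathbb{P})$ produces a full-measure set $\Omega_0$ such that for every $\omega\in\Omega_0$ one has $u_n(z,\omega)\to\gamma(z)$ for $m$-a.e.\ $z\in\mathbb{C}$.

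Now fix $\omega\in\Omega_0$, a real $E$, and $r>0$. The solid-disc sub-mean-value inequality gives $u_n(E,\omega)\le\frac{1}{\pi r^2}\int_{|z-E|\le r}u_n(z,\omega)\,dm(z)$, and since $u_n\le h$ on the disc with $h$ integrable there, reverse Fatou yields
\[
\varlimsup_{n\to\infty}u_n(E,\omega) \le \frac{1}{\pi r^2}\int_{|z-E|\le r}\varlimsup_{n\to\infty}u_n(z,\omega)\,dm(z) = \frac{1}{\pi r^2}\int_{|z-E|\le r}\gamma(z)\,dm(z),
\]
using the $m$-a.e.\ convergence of $u_n(\cdot,\omega)$ on the disc for the equality. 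Bounding $\gamma$ pointwise by $v_k(z):=\frac1k\int\log\Vert T_{[0,k-1],z,\omega}\Vert\,d\mathbb{P}(\omega)$ — which is continuous in $z$ by dominated convergence, while $\gamma=\inf_k v_k$ by Fekete's lemma — and letting $r\to 0$ gives $\varlimsup_n u_n(E,\omega)\le v_k(E)$ for every $k$, hence $\varlimsup_n u_n(E,\omega)\le\gamma(E)$. Since $E$ is arbitrary, the theorem follows. The one delicate step is the interchange of $\varlimsup$ with the integral, and that is precisely where the uniform-in-$n$ majorant $h$ is used; it is the mechanism that converts ``$m$-a.e.\ $E$'' into ``every $E$.'' The remaining ingredients are standard: the subharmonicity of $z\mapsto\log\Vert M(z)\Vert$ for holomorphic matrix-valued $M$, and the continuity of the $v_k$, invoked here to sidestep any separate appeal to continuity of $\gamma$.
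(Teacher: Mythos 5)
Your proof is correct, and it is the classical Craig--Simon argument via subharmonicity in the complexified energy: complexify, observe that $u_n(z,\omega)=\frac{1}{n+1}\log\Vert T_{[0,n],z,\omega}\Vert$ is subharmonic with an $n$-independent locally integrable majorant, use Kingman plus Fubini to get $m$-a.e.\ $z$ convergence for a.e.\ $\omega$, and then pass the $\varlimsup$ through a disc average by reverse Fatou before shrinking the disc against the continuous minorants $v_k$. The paper itself does not supply a proof of Theorem \ref{CS1}; it cites \cite{cs} and uses it as a black box, so your argument is exactly the one being invoked and there is no gap. It is worth contrasting this with what the paper actually does prove in Section \ref{uniformcs}: Theorem \ref{QCS} is a uniform-in-$E$ strengthening, established by a deliberately subharmonicity-free route combining Lagrange interpolation at Chebyshev-type nodes (Lemma \ref{elementary}), the uniform LDT estimate of Lemma \ref{ldt lemma}, and Borel--Cantelli. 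Your approach gives, for a.e.\ $\omega$ and each $E$, an upper bound holding beyond a threshold $N(\omega,E)$ that depends on both; the interpolation scheme in the paper produces a threshold $N(\omega)$ independent of $E$, which is precisely what the dynamical-localization argument in Section \ref{dyn} requires and does not fall out directly from the disc-averaging step (where the radius must shrink as $E$-dependence is removed). So the two proofs are genuinely different in mechanism and in what they deliver, even though both rest ultimately on the polynomial structure of $P_{[0,n],E,\omega}$ in $E$.
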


\section{Main lemmas}\label{alllemma}

Denote
\begin{equation}\label{B+}
    B_{[a,b],\epsilon}^{+} =\left\{(E,\omega): |P_{[a,b],E,\omega}|\geq e^{(\gamma(E)+\epsilon)(b-a+1)}\right\}
\end{equation}
\begin{equation}\label{B-}
    B_{[a,b],\epsilon}^{-} =\left\{(E,\omega): |P_{[a,b],E,\omega}|\leq e^{(\gamma(E)-\epsilon)(b-a+1)}\right\}\
\end{equation}
and denote $B_{[a,b],\epsilon,E}^{\pm}=\{\omega:(E,\omega)\in B_{[a,b],\epsilon}^{\pm}\}$, $B_{[a,b],\epsilon,\omega}^{\pm}=\{E:(E,\omega)\in B_{[a,b],\epsilon}^{\pm}\}$,
$B_{[a,b],*}=B_{[a,b],*}^+\cup B_{[a,b],*}^-$.

Let $E_{j,(\omega_a,\cdots,\omega_b)}$ be eigenvalues of $H_{[a,b],\omega}$ with $\omega|_{[a,b]}=(\omega_a,\cdots,\omega_b)$.

Large deviation theorem gives us the estimate that for all $E, a, b,\epsilon$
\begin{equation}\label{ldt}
\mathbb{P}(B_{[a,b],\epsilon,E}^\pm)\leq e^{-\eta(b-a+1)}
\end{equation}

Assume $\epsilon=\epsilon_0<\frac{1}{8}\nu$ is fixed for now, so we
omit it from the notations until Lemma \ref{omega3}. Let $\eta_0=\eta(\epsilon_0)$ be the corresponding parameter from Lemma \ref{ldt lemma}
\begin{lemma}\label{lemma1}
For $n \geq 2$, if $x$ is $(\gamma(E)-8\epsilon_0,n,E,\omega)$-singular, then
\[(E,\omega)\in B_{[x-n,x+n]}^-\cup B_{[x-n,x]}^+\cup B_{[x,x+n]}^+
\]
\end{lemma}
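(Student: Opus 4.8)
The plan is to prove the contrapositive: assume $(E,\omega)\notin B_{[x-n,x+n]}^-$ and $(E,\omega)\notin B_{[x-n,x]}^+$ and $(E,\omega)\notin B_{[x,x+n]}^+$, and deduce that $x$ is $(\gamma(E)-8\epsilon_0,n,E,\omega)$-regular. The starting point is the determinant formula \eqref{A}: for the interval $[x-n,x+n]$ of length $2n+1$, we have
\[
|G_{[x-n,x+n],E,\omega}(x,x-n)| = \frac{|P_{[x-n,x-n-1],E,\omega}|\,|P_{[x+1,x+n],E,\omega}|}{|P_{[x-n,x+n],E,\omega}|} = \frac{|P_{[x+1,x+n],E,\omega}|}{|P_{[x-n,x+n],E,\omega}|},
\]
using the convention $P_{[a,b]}=1$ when $a>b$; symmetrically,
\[
|G_{[x-n,x+n],E,\omega}(x,x+n)| = \frac{|P_{[x-n,x-1],E,\omega}|}{|P_{[x-n,x+n],E,\omega}|}.
\]
(One should double-check indices against the precise statement of \eqref{A}, which is phrased for $x\le y$; here the two Green's function entries sit symmetrically around $x$, so both reductions go through with $P_{[a,x-1]}$ and $P_{[y+1,b]}$ factors, one of which is trivial.)

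Next I would bound numerator and denominator separately using the three exclusion hypotheses. Since $(E,\omega)\notin B_{[x-n,x+n]}^-$, the denominator satisfies $|P_{[x-n,x+n],E,\omega}| \geq e^{(\gamma(E)-\epsilon_0)(2n+1)}$. Since $(E,\omega)\notin B_{[x,x+n]}^+$, the numerator in the first ratio satisfies $|P_{[x+1,x+n],E,\omega}| \leq |P_{[x,x+n],E,\omega}| \cdot (\text{a bounded factor})$, or more directly one bounds $|P_{[x+1,x+n]}|$ itself — here one needs $(E,\omega)\notin B_{[x+1,x+n]}^+$, so there may be a minor index bookkeeping point reconciling $B_{[x,x+n]}^+$ with $B_{[x+1,x+n]}^+$; since removing one site changes the length by one and $\gamma(E)$ is bounded on the compact spectrum, this costs only a bounded multiplicative constant absorbed into the slack between $\epsilon_0$ and $8\epsilon_0$. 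Thus $|P_{[x+1,x+n],E,\omega}| \leq e^{(\gamma(E)+\epsilon_0)(n)+O(1)}$, and similarly for $|P_{[x-n,x-1],E,\omega}|$ using the complement of $B_{[x-n,x]}^+$. Combining,
\[
|G_{[x-n,x+n],E,\omega}(x,x-n)| \leq \frac{e^{(\gamma(E)+\epsilon_0)n + O(1)}}{e^{(\gamma(E)-\epsilon_0)(2n+1)}} = e^{-(\gamma(E)-3\epsilon_0)n + O(1)},
\]
and the same bound holds for the $(x,x+n)$ entry. For $n$ large the $O(1)$ term is absorbed, giving $(c,n,E,\omega)$-regularity for $c = \gamma(E)-4\epsilon_0$, which is certainly $\geq \gamma(E)-8\epsilon_0$; the generous gap between $3\epsilon_0$ and $8\epsilon_0$ is there precisely to swallow all the constant-order corrections uniformly, and since $n\geq 2$ one can check the bounds hold without an asymptotic qualifier by taking the constant in $\epsilon_0<\tfrac18\nu$ into account, or else state the lemma with the understanding that it holds for $n$ past a fixed threshold — but the cleanest route is to carry the $O(1)$ explicitly and note $2\epsilon_0(2n+1)\geq 2\epsilon_0 \cdot 5 > $ any fixed constant once $\epsilon_0$ is comparable to $\nu$ and $n\ge 2$.

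The main obstacle I anticipate is purely the index accounting: matching the half-line blocks $[x-n,x-1]$, $[x+1,x+n]$ that actually appear in the Green's function formula against the slightly different blocks $[x-n,x]$, $[x,x+n]$ named in the sets $B^\pm$ in the statement, and making sure the length discrepancies and the site $x$ itself (which contributes a factor $|\omega_x - E|\le$ const on the compact spectrum, or can be handled by an expansion of the determinant along the last row) are all absorbed into the $8\epsilon_0$ budget uniformly in $E$ on the compact set $\sigma$. This is where the choice $\epsilon_0 < \tfrac18\nu$ and the coefficient $8$ pay off: every reduction step costs at most $O(\epsilon_0)$ relative error plus a bounded absolute error, and there are only finitely many such steps, so the total stays within the allotted $8\epsilon_0 n$.
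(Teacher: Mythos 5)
Your overall strategy --- prove the contrapositive via formula \eqref{A} --- is the right one and is what the paper's terse proof sketch intends, and you correctly notice that \eqref{A} produces numerators $P_{[x+1,x+n]}$ and $P_{[x-n,x-1]}$ (each on $n$ sites), not the $P_{[x,x+n]}$, $P_{[x-n,x]}$ (each on $n+1$ sites) that appear in the sets named in the conclusion. However, the way you try to reconcile them does not work. The claim that $|P_{[x+1,x+n]}|\le C\,|P_{[x,x+n]}|$ for a bounded $C$, and more generally that ``removing one site \dots\ costs only a bounded multiplicative constant absorbed into the slack between $\epsilon_0$ and $8\epsilon_0$,'' is simply false. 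The two determinants are related only through the recursion $P_{[x,x+n]}=(\omega_x-E)P_{[x+1,x+n]}-P_{[x+2,x+n]}$, so knowing $P_{[x,x+n]}$ is small tells you nothing about $P_{[x+1,x+n]}$: when $E$ is an eigenvalue of $H_{[x,x+n]}$ one has $P_{[x,x+n]}=0$ while $P_{[x+1,x+n]}$ is typically of size $e^{\gamma(E)n}$, and this near-eigenvalue regime is exactly the one that matters in the argument. So this is not bookkeeping slack to be absorbed by the $8\epsilon_0$ margin; it is a genuine logical gap.

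The correct resolution is to state (and use) the conclusion with the half-line intervals that the formula actually produces: $(E,\omega)\in B^-_{[x-n,x+n]}\cup B^+_{[x-n,x-1]}\cup B^+_{[x+1,x+n]}$. With that, the exponent arithmetic is immediate and needs no comparison between nested determinants: if none of these three events holds, then for either boundary point
\[
|G_{[x-n,x+n],E,\omega}(x,x\pm n)|\le \frac{e^{(\gamma(E)+\epsilon_0)n}}{e^{(\gamma(E)-\epsilon_0)(2n+1)}}\le e^{-(\gamma(E)-3\epsilon_0)n}\le e^{-(\gamma(E)-8\epsilon_0)n},
\]
which is exactly $(\gamma(E)-8\epsilon_0,n,E,\omega)$-regularity. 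The room between $3\epsilon_0$ and $8\epsilon_0$ is where the genuine slack lives; it is not available to paper over a comparison of $P_{[x+1,x+n]}$ with $P_{[x,x+n]}$.
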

\begin{remark}
  Note that from \eqref{ldt}, for all $E,x,n\geq 2$,
  \[
    \mathbb{P}(B_{[x-n,x+n],E}^-\cup B_{[x-n,x],E}^+\cup B_{[x,x+n],E}^+)\leq 3e^{-\eta_0 (n+1)}
  \]
\end{remark}
\begin{proof}
Follows imediately from the definition of singularity and \eqref{A}.
\end{proof}

Now we will use the following three lemmas to find the proper $\Omega_0$ for Theorem \ref{thm2}.
\begin{lemma}\label{omega1}
  Let $0<\delta_0<\eta_0$. For a.e. $\omega$ (we denote this set as $\Omega_1$), there exists $ N_1=N_1(\omega)$, such that for every $ n>N_1$,
  \[
  \max\{m(B_{[n+1,3n+1],\omega}^-),m(B_{[-n,n],\omega}^-)\}\leq e^{-(\eta_0-\delta_0)(2n+1)}
  \]
\end{lemma}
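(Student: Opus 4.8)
The plan is to run a Borel--Cantelli argument in the scale $n$, integrating the large deviation bound \eqref{ldt} over energy. Fix one of the two intervals, say $[n+1,3n+1]$ (the other is identical after relabelling), and set $I = [n+1,3n+1]$, so $|I| = 2n+1$. The quantity $m(B_{I,\omega}^-)$ is a random variable in $\omega$, and by Fubini's theorem
\[
\int_\Omega m(B_{I,\omega}^-)\, d\mathbb{P}(\omega) = \int_{\mathbb{R}} \mathbb{P}(B_{I,\epsilon_0,E}^-)\, dm(E) \leq \int_{\sigma'} e^{-\eta_0(2n+1)}\, dm(E),
\]
where $\sigma'$ is a fixed compact neighborhood of $\sigma$: for $E$ far from $\sigma$ the determinant $P_{I,E,\omega}$ is comparable to $e^{|I|\log|E|}$ which exceeds $e^{(\gamma(E)-\epsilon_0)|I|}$ once $|E|$ is large, so $B_{I,\epsilon_0,E}^-$ is empty outside a bounded set (uniformly in $\omega$ and $n$), and hence $m(B_{I,\omega}^-)$ is supported in that bounded set. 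Thus $\int_\Omega m(B_{I,\omega}^-)\, d\mathbb{P} \leq C_0\, e^{-\eta_0(2n+1)}$ with $C_0 = m(\sigma')$.

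Now apply Chebyshev's inequality at level $e^{-(\eta_0-\delta_0)(2n+1)}$:
\[
\mathbb{P}\left\{\omega : m(B_{[n+1,3n+1],\omega}^-) > e^{-(\eta_0-\delta_0)(2n+1)}\right\} \leq C_0\, e^{-\eta_0(2n+1)} \cdot e^{(\eta_0-\delta_0)(2n+1)} = C_0\, e^{-\delta_0(2n+1)}.
\]
The same bound holds with $[n+1,3n+1]$ replaced by $[-n,n]$. Since $\sum_n e^{-\delta_0(2n+1)} < \infty$, the Borel--Cantelli lemma gives a full-measure set $\Omega_1$ and, for each $\omega \in \Omega_1$, a threshold $N_1(\omega)$ beyond which $\max\{m(B_{[n+1,3n+1],\omega}^-), m(B_{[-n,n],\omega}^-)\} \leq e^{-(\eta_0-\delta_0)(2n+1)}$, which is exactly the claim.

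The one step requiring genuine care is the measurability and uniform-in-$(\omega,n)$ support control needed to make the Fubini step legitimate — namely, that $B_{[a,b],\epsilon_0}^-$ is a (jointly measurable) subset of $\mathbb{R}\times\Omega$ and that its $E$-sections live in a fixed compact set independent of $\omega$ and of the scale. Measurability follows because $P_{[a,b],E,\omega}$ is a polynomial in $E$ with coefficients continuous in $(\omega_a,\dots,\omega_b)$, so $(E,\omega)\mapsto |P_{[a,b],E,\omega}|$ is continuous, and $\gamma$ is continuous (indeed it is the decreasing limit of continuous functions, and continuity on $\sigma$ is classical for the Anderson model); hence $B^-$ is Borel. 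The support bound is the elementary estimate that the leading behavior of $|P_{[a,b],E,\omega}|$ in $E$ forces $E$ to lie in, say, $[-C_1, C_1]$ with $C_1$ depending only on $\sup_{s\in S}|s|$ and on $\nu$, for membership in $B^-$ to be possible. With those two points in hand the rest is the routine Chebyshev/Borel--Cantelli bookkeeping above.
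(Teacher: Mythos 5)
Your proof is correct and takes essentially the same approach as the paper: bound $\int_\Omega m(B^-_{I,\omega})\,d\mathbb{P}$ via the large deviation estimate (which the paper writes implicitly as a bound on $m\times\mathbb{P}(B^-_I)$ over $E\in\sigma$), then apply Chebyshev at the level $e^{-(\eta_0-\delta_0)(2n+1)}$ and Borel--Cantelli in $n$. The only difference is that you make explicit the measurability and compact-support-in-$E$ points that the paper treats as routine by restricting to $\sigma$.
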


\begin{proof}
By \eqref{ldt},
\[
m\times\mathbb{P}(B_{[n+1,3n+1]}^-)\leq m(\sigma)e^{-\eta_0(2n+1)}
\]
\[
m\times\mathbb{P}(B_{[-n,n]}^-)\leq m(\sigma)e^{-\eta_0(2n+1)}
\]
If we denote
    \[
      \Omega_{\delta_0,n,+}=\left\{\omega:m(B_{[n+1,3n+1],\omega}^-)\leq e^{-(\eta_0-\delta_0)(2n+1)}\right\}
    \]
    \[
    \Omega_{\delta_0,n,-}=\left\{\omega:m(B_{[-n,n],\omega}^-)\leq e^{-(\eta_0-\delta_0)(2n+1)}\right\},
    \]
We have by Tchebyshev,
\begin{equation}\label{A5}
      \mathbb{P}(\Omega_{\delta_0,n,\pm}^c)
      \leq m(\sigma)e^{-\delta_0(2n+1)}.
\end{equation}
By Borel-Cantelli lemma, we get for $a.e.~\omega$,
\[
\max\{m(B_{[n+1,3n+1],\omega}^-),m(B_{[-n,n],\omega}^-)\}\leq e^{-(\eta_0-\delta_0)(2n+1)},
\]
for $n>N_1(\omega)$.
\end{proof}
\begin{remark}\label{N1}
  Note that we can actually shift the operator and use center point
  $l$ instead of $0$. Then we will get $\Omega_1(l)$ instead of
  $\Omega_1$, $N_1(l,\omega)$ instead of $N_1(\omega)$. And if we pick
  $N_1(l,\omega)$ in the theorem as the smallest integer satisfying
  the conclusion, we can estimate when we will have $N_1(l,\omega)\leq \ln^2 |l|$, which is very useful in the proof for dynamical localization in section 6.
\end{remark}
The next results follows from :
\begin{thm}
\label{CS}
  For a.e. $\omega$(we denote this set as $\Omega_2$), for all $E$, we have
  \begin{equation}\label{C}
      \max\left\{\varlimsup_{n\to\infty} \frac{ \log\Vert T_{[-n,0],E,\omega}\Vert}{n+1}, \varlimsup_{n\to\infty} \frac{\log\Vert T_{[0,n],E,\omega}\Vert}{n+1}\right\}\leq\gamma(E)
  \end{equation}
  \begin{equation}\label{D}
    \max\left\{\varlimsup_{n\to\infty} \frac{\log\Vert T_{[n+1,2n+1],E,\omega}\Vert}{n+1}, \varlimsup_{n\to\infty} \frac{\log\Vert T_{[2n+1,3n+1],E,\omega}\Vert}{n+1}\right\}\leq\gamma(E)
  \end{equation}
\end{thm}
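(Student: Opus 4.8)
\textbf{Proof plan for Theorem \ref{CS}.}
The statement asks for an almost-sure, uniform-in-$E$ upper bound of the form $\limsup_n \tfrac1{n+1}\log\|T_{[a(n),b(n)],E,\omega}\|\le\gamma(E)$ for the two pairs of moving windows $[-n,0],[0,n]$ and $[n+1,2n+1],[2n+1,3n+1]$. My plan is to deduce all four assertions from the single Craig--Simon statement already available as Theorem \ref{CS1}, using only elementary manipulations: a reflection symmetry, a shift of the base point, and the cocycle (multiplicativity) identity for transfer matrices. The only subtlety is that Theorem \ref{CS1} is ``a.e.\ $\omega$ for all $E$'', with the null set \emph{a priori} depending on the choice of base point and direction; since we take a countable intersection over base points in $\mathbb Z$ (and two directions), the union of the exceptional sets is still null, and $\Omega_2$ will be this countable intersection.

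First, for \eqref{C}: the bound on $\limsup_n \tfrac1{n+1}\log\|T_{[0,n],E,\omega}\|$ is \emph{verbatim} Theorem \ref{CS1}. For the backward direction $[-n,0]$, I would apply Theorem \ref{CS1} to the reflected potential $\tilde\omega_k:=\omega_{-k}$, which is again a sample of the same i.i.d.\ process (its law is $\mathbb P$ by symmetry of $\mathbb Z$), and note that the transfer matrix $T_{[-n,0],E,\omega}$ equals, up to conjugation by the flip $\bigl(\begin{smallmatrix}0&1\\1&0\end{smallmatrix}\bigr)$ and inversion, the forward matrix $T_{[0,n],E,\tilde\omega}$; conjugation and inversion do not change the exponential growth rate of the norm (the conjugating matrix is fixed and $\|M^{-1}\|=\|M\|$ for $M\in SL_2$). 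Hence the a.e.\ statement for the forward direction transfers to the backward one, and taking the max of two a.e.\ events is still a.e.

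For \eqref{D}: apply Theorem \ref{CS1} with base point shifted to $n+1$, i.e.\ to the operator $H_{T^{-(n+1)}\omega}$ — but here one must be careful because the window $[n+1,2n+1]$ has its \emph{left} endpoint growing with $n$, so a naive shift by a moving amount is not a single application of an a.e.\ statement. The clean way is to write, by the cocycle identity,
\[
T_{[0,2n+1],E,\omega}=T_{[n+1,2n+1],E,\omega}\,T_{[0,n],E,\omega},
\]
so that
\[
\|T_{[n+1,2n+1],E,\omega}\|\le \|T_{[0,2n+1],E,\omega}\|\,\|T_{[0,n],E,\omega}^{-1}\|=\|T_{[0,2n+1],E,\omega}\|\,\|T_{[0,n],E,\omega}\|,
\]
using $\det T_{[0,n]}=1$. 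Taking $\tfrac1{n+1}\log$ and $\limsup$, the first factor contributes $\limsup_n \tfrac{2n+2}{n+1}\cdot\tfrac1{2n+2}\log\|T_{[0,2n+1]}\|\le 2\gamma(E)$ by Theorem \ref{CS1} (applied along the subsequence $2n+1$), and the second contributes $\gamma(E)$; this only gives $3\gamma(E)$, which is too weak. So instead I would use the \emph{reverse} cocycle split $T_{[0,2n+1]}=T_{[n+1,2n+1]}T_{[0,n]}$ together with the a.e.\ \emph{lower} bound $\tfrac1{n+1}\log\|T_{[0,n],E,\omega}\|\to\gamma(E)$ (which for $SL_2$ cocycles follows from $\|T^{-1}\|=\|T\|$ and subadditivity, or is part of the standard Craig--Simon/Furstenberg--Kesten package) to conclude $\limsup_n\tfrac1{n+1}\log\|T_{[n+1,2n+1]}\|\le \limsup_n\tfrac1{n+1}\log\|T_{[0,2n+1]}\| - \liminf_n\tfrac1{n+1}\log\|T_{[0,n]}\| \le 2\gamma(E)-\gamma(E)=\gamma(E)$. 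The $[2n+1,3n+1]$ window is handled identically using $T_{[0,3n+1]}=T_{[2n+1,3n+1]}T_{[0,2n+1]}$ and dividing out the $[0,2n+1]$ factor. Finally $\Omega_2$ is the intersection of: the full-measure set from Theorem \ref{CS1} (forward), its reflection (for the backward direction), and the full-measure set on which $\tfrac1{n+1}\log\|T_{[0,n],E,\omega}\|\to\gamma(E)$ for all $E$ — a countable intersection of full-measure sets, hence full measure.

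\textbf{Main obstacle.} The one genuine point to get right is the second half: one cannot directly apply Craig--Simon to a window whose left endpoint drifts to infinity, so the estimate must be reorganized as a difference of two base-point-$0$ quantities. This forces me to invoke not just the Craig--Simon \emph{upper} bound but also the matching a.e.\ \emph{lower} bound $\liminf_n\tfrac1{n+1}\log\|T_{[0,n],E,\omega}\|\ge\gamma(E)$ uniformly in $E$; establishing (or citing) that uniform lower bound — which for $SL_2$ transfer matrices is standard but does require the $\|M\|=\|M^{-1}\|$ trick plus Fatou/monotonicity to upgrade from a.e.-each-$E$ to a.e.-all-$E$ — is the step I'd expect to spell out most carefully, and it is presumably why the paper states \eqref{C}--\eqref{D} as a separate theorem rather than as an immediate corollary of Theorem \ref{CS1}.
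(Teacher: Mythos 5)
Your treatment of \eqref{C} is essentially correct: the forward window is Theorem \ref{CS1} verbatim, and the backward window follows by reflection invariance of the i.i.d.\ law and the fact that conjugation by a fixed matrix and inversion (with $\det=1$) preserve the exponential growth rate. That is also how the paper reads \eqref{C} — as ``a direct reformulation'' of Theorem \ref{CS1}.

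However, your proof of \eqref{D} contains a sign error that cannot be repaired within the cocycle-algebra framework you chose. From the identity $T_{[0,2n+1]}=T_{[n+1,2n+1]}T_{[0,n]}$, submultiplicativity gives $\log\|T_{[0,2n+1]}\|\leq \log\|T_{[n+1,2n+1]}\|+\log\|T_{[0,n]}\|$, i.e.
\[
\log\|T_{[n+1,2n+1]}\|\;\geq\;\log\|T_{[0,2n+1]}\|-\log\|T_{[0,n]}\|,
\]
which is a \emph{lower} bound on the middle factor, not the upper bound you need. Solving the other way, $T_{[n+1,2n+1]}=T_{[0,2n+1]}T_{[0,n]}^{-1}$ with $\|T_{[0,n]}^{-1}\|=\|T_{[0,n]}\|$, yields only the \emph{sum} upper bound $\log\|T_{[n+1,2n+1]}\|\leq\log\|T_{[0,2n+1]}\|+\log\|T_{[0,n]}\|$ — precisely the $3\gamma(E)$ estimate you correctly flagged as too weak. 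The quotient inequality
$\varlimsup\tfrac1{n+1}\log\|T_{[n+1,2n+1]}\|\leq\varlimsup\tfrac1{n+1}\log\|T_{[0,2n+1]}\|-\varliminf\tfrac1{n+1}\log\|T_{[0,n]}\|$
that you invoke does not follow from any norm algebra; there is no such elementary inequality, so the reduction of \eqref{D} to \eqref{C} via cocycle manipulation is a dead end. The paper handles \eqref{D} by a genuinely different route: it re-runs the Craig--Simon subharmonicity argument directly on the drifting windows. That argument only requires that the expected normalized log-norm of the window converges to $\gamma(E)$, and by stationarity of the i.i.d.\ process $T_{[n+1,2n+1],E,\omega}$ and $T_{[2n+1,3n+1],E,\omega}$ have the same law as $T_{[0,n],E,\omega}$, so the averages are literally the same as in Theorem \ref{CS1}. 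In other words, the content of ``follows by exactly the same proof'' is that the Craig--Simon mechanism is insensitive to where the window sits, not that the drifting-window bound is an algebraic consequence of the fixed-origin one.
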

\begin{remark}
  \eqref{C} is a direct reformulation of the result of \cite{cs},
  Theorem \ref{CS1}, while \eqref{D} follows by exactly the same proof.
\end{remark}
\begin{cor}\label{omega2}
  For every $ \omega\in\Omega_2$, for every $E$, there exists $N_2=N_2(\omega,E)$, such that for every $ n>N_2$,
\[
\begin{aligned}
&\max\{\Vert T_{[-n,0],E,\omega}\Vert, \Vert T_{[0,n],E,\omega}\Vert\}<e^{(\gamma(E)+\epsilon)(n+1)}\\ &\max\{\Vert T_{[n+1,2n+1],E,\omega}\Vert, \Vert T_{[2n+1,3n+1],E,\omega}\Vert\}<e^{(\gamma(E)+\epsilon)(n+1)}
\end{aligned}
\]

\end{cor}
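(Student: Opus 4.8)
The plan is to read the corollary off Theorem \ref{CS} by simply unwinding the definition of $\varlimsup$. Fix $\omega\in\Omega_2$ and an energy $E$. By \eqref{C}, the numerical sequence $a_n:=\frac{\log\Vert T_{[0,n],E,\omega}\Vert}{n+1}$ satisfies $\varlimsup_{n\to\infty}a_n\le\gamma(E)$, so by the very meaning of the limit superior, for the fixed $\epsilon=\epsilon_0>0$ there is an $N^{(1)}=N^{(1)}(\omega,E)$ such that $a_n<\gamma(E)+\epsilon$, i.e. $\Vert T_{[0,n],E,\omega}\Vert<e^{(\gamma(E)+\epsilon)(n+1)}$, for all $n>N^{(1)}$. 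Applying the identical argument to the other three sequences appearing in \eqref{C} and \eqref{D}, namely $\frac{\log\Vert T_{[-n,0],E,\omega}\Vert}{n+1}$, $\frac{\log\Vert T_{[n+1,2n+1],E,\omega}\Vert}{n+1}$ and $\frac{\log\Vert T_{[2n+1,3n+1],E,\omega}\Vert}{n+1}$, yields thresholds $N^{(2)},N^{(3)},N^{(4)}$ past which the corresponding exponential bounds hold.

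It then suffices to set $N_2(\omega,E)=\max\{N^{(1)},N^{(2)},N^{(3)},N^{(4)}\}$: for every $n>N_2$ all four inequalities hold at once, which is exactly the asserted pair of bounds on the maxima. Since Theorem \ref{CS} guarantees $\mathbb{P}(\Omega_2)=1$ and the estimates \eqref{C}, \eqref{D} hold there for \emph{all} $E$, the corollary follows; note that the exceptional set $\Omega_2$ does not depend on $E$, whereas the threshold $N_2$ depends on both $\omega$ and $E$, as stated.

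I do not expect any genuine obstacle here — the corollary is a routine extraction of an ``eventually'' statement from a $\varlimsup$ bound, and the only points worth keeping straight are the $E$-uniformity of $\Omega_2$ (inherited from Theorem \ref{CS}) versus the $E$-dependence of $N_2$. The same reasoning in fact gives the conclusion for an arbitrary $\epsilon>0$, not merely the fixed $\epsilon_0$; all the real content sits in Theorem \ref{CS}, which upgrades Craig--Simon's bound \cite{cs} to hold simultaneously for all energies and along the shifted windows used later in the proof.
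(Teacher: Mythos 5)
Your proposal is correct and is exactly the argument the paper has in mind; the paper treats Corollary~\ref{omega2} as an immediate consequence of Theorem~\ref{CS} and gives no separate proof, so your unwinding of the four $\varlimsup$ bounds into eventual estimates and taking $N_2$ as the maximum of the four thresholds is precisely the intended routine step.
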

\begin{lemma}\label{omega3}
   Let $\epsilon>0,K>1$, For a.e. $\omega$(we denote this set as $\Omega_3=\Omega_3(\epsilon,K)$), there exists $ N_3=N_3(\omega)$, so that for every $ n>N_3$, for every $ E_{j,(\omega_{n+1},\cdots,\omega_{3n+1})}$, for every $ y_1,y_2$ satisfying $-n\leq y_1\leq y_2\leq n$,  $\abs{-n-y_1}\geq\frac{n}{K}$, and $\abs{n-y_2}\geq\frac{n}{K}$,
 we have $E_{j,(\omega_{n+1},\cdots,\omega_{3n+1})}\notin B_{[-n,y_1],\epsilon,\omega}\cup B_{[y_2,n],\epsilon,\omega}$.
\end{lemma}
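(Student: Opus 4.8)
\emph{Proof idea.} The plan is to exploit the statistical independence between the block $(\omega_{n+1},\dots,\omega_{3n+1})$, which alone determines the eigenvalues $E_j:=E_{j,(\omega_{n+1},\dots,\omega_{3n+1})}$ of $H_{[n+1,3n+1],\omega}$, and the block $(\omega_{-n},\dots,\omega_{n})$: indeed, for $-n\le y_1\le y_2\le n$ the quantities $P_{[-n,y_1],E,\omega}$ and $P_{[y_2,n],E,\omega}$ depend only on $(\omega_{-n},\dots,\omega_{y_1})$ and $(\omega_{y_2},\dots,\omega_{n})$, all of whose indices lie in $[-n,n]$, which is disjoint from $[n+1,3n+1]$. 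I would therefore condition on $(\omega_{n+1},\dots,\omega_{3n+1})$. This freezes the (at most) $2n+1$ real numbers $E_j$, all of which lie in a fixed compact set $\mathcal K$ independent of $n$ and $\omega$ (for instance $\mathcal K=[\min S-2,\ \max S+2]$, by Gershgorin applied to the tridiagonal matrix $H_{[n+1,3n+1],\omega}$ with diagonal in $S$ and off‑diagonal entries $1$), while by independence the conditional law of $(\omega_{-n},\dots,\omega_{n})$ is the unconditional one. I expect this conditioning step to be the only real point: the $E_j$ are themselves random, so the fixed‑energy large deviation bound of Lemma~\ref{ldt lemma} cannot be applied to them directly, and it is precisely the uniformity in $E\in\mathcal K$ there, together with the independence, that makes the substitution of $E_j$ for $E$ legitimate.

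With the right block fixed, take any admissible triple $(j,y_1,y_2)$. The interval $[-n,y_1]$ has length $y_1+n+1=\abs{-n-y_1}+1\ge n/K+1$ and $[y_2,n]$ has length $n-y_2+1=\abs{n-y_2}+1\ge n/K+1$, so as soon as $n\ge KN_0(\epsilon)$, Lemma~\ref{ldt lemma} applied at the (now deterministic) energy $E_j\in\mathcal K$ gives
\[
  \mathbb{P}\bigl(E_j\in B_{[-n,y_1],\epsilon,\omega}\cup B_{[y_2,n],\epsilon,\omega}\ \big|\ \omega_{n+1},\dots,\omega_{3n+1}\bigr)\ \le\ C_0\,e^{-\eta(\epsilon)\,n/K},
\]
with $C_0$ absolute (one may take $C_0=4$, using \eqref{ldt} for each of the two intervals and each sign). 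Union‑bounding over the $\le 2n+1$ indices $j$ and the $\le 2n+1$ admissible values of $y_1$ and of $y_2$, the conditional probability of
\[
  \mathcal B_n:=\bigl\{\,\exists\, j,\,y_1,\,y_2\text{ as in the lemma}:\ E_j\in B_{[-n,y_1],\epsilon,\omega}\cup B_{[y_2,n],\epsilon,\omega}\,\bigr\}
\]
is at most $Cn^{2}e^{-\eta(\epsilon)n/K}$ for an absolute constant $C$; since this bound does not depend on the conditioning, $\mathbb{P}(\mathcal B_n)\le Cn^{2}e^{-\eta(\epsilon)n/K}$, which is summable in $n$.

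Finally I would apply Borel--Cantelli: $\mathbb{P}(\limsup_n\mathcal B_n)=0$, so $\Omega_3=\Omega_3(\epsilon,K):=\Omega\setminus\limsup_n\mathcal B_n$ has full measure, and for each $\omega\in\Omega_3$ there is $N_3=N_3(\omega)$, which we may take $\ge KN_0(\epsilon)$, with $\omega\notin\mathcal B_n$ for all $n>N_3$ — which is exactly the assertion. Apart from the conditioning/independence step flagged above, this is a union bound followed by Borel--Cantelli, in the spirit of Lemma~\ref{omega1}.
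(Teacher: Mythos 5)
Your proposal is correct and follows essentially the same route as the paper: condition on (equivalently, integrate out) the block $(\omega_{n+1},\dots,\omega_{3n+1})$ determining the eigenvalues $E_j$, use independence to apply the uniform LDT of Lemma~\ref{ldt lemma} at the frozen $E_j$ over the intervals $[-n,y_1]$ and $[y_2,n]$ of length $\ge n/K$, then union bound over the $O(n^3)$ triples $(j,y_1,y_2)$ and invoke Borel--Cantelli. (The only slip is cosmetic: the union bound gives a prefactor $(2n+1)^3$, i.e.\ $Cn^3$ rather than your $Cn^2$, which of course does not affect summability.)
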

\begin{remark}
  Note that $\epsilon$ and $K$ are not fixed yet, we're going to determine them later in section \ref{pf}.
\end{remark}
\begin{proof}
Let $\bar {\mathbb{P}}$
be the probability that there are some $y_1, y_2, j$ with \[
E_{j,(\omega_{n+1},\cdots,\omega_{3n+1})}\in B_{[-n,y_1],\epsilon,\omega}\cup B_{[y_2,n],\epsilon,\omega}.
\] Note that for any fixed $\omega_c,\cdots,\omega_d $, with $[c,d]\cap[a,b]=\emptyset$, by independence,
\[
\mathbb{P}(B_{[a,b],\epsilon,E_{j,(\omega_c,\cdots,\omega_d)}})=\mathbb{P}_{[a,b]}(B_{[a,b],\epsilon,E_{j,(\omega_c,\cdots,\omega_d)}})\leq e^{-\eta_0(b-a+1)}
\]
Applying to $[a,b]=[-n,y_1]$ or $[y_2,n]$, $[c,d]=[n+1,3n+1]$ and integrating over $\omega_{-n},\cdots,\omega_{y_1}$ or $\omega_{y_2},\cdots,\omega_{n}$, we get
\[
\mathbb{P}(B_{[-n,y_1],\epsilon,E_{j,(\omega_{n+1},\cdots,\omega_{3n+1})}}\cup B_{[y_2,n],\epsilon,E_{j,(\omega_{n+1},\cdots,\omega_{3n+1})}}) \leq 2e^{-\eta_0(\frac{n}{K}+1)},
\]
so
\begin{equation}\label{B5}
\bar{\mathbb{P}}\leq(2n+1)^3 2e^{-\eta_0(\frac{n}{K}+1)}
\end{equation}
Thus by Borel-Cantelli, we get the result.

\end{proof}
\begin{remark}\label{N3}
 Similar to remark \ref{N1}, we can get $\Omega_3(l)$, $N_3(l,\omega)$
 for an operator shifted by $\ell$ instead, and get the result that
 for $a.e. \omega$ (we denote this set as $\Omega_{N_3}$),  there
 exists $L_3(\omega)$, such that for any $|l|>L_3$, $N_3(l,\omega)\leq
 \ln^2 |l|.$ This will be of use in section 6 for proving dynamical localization.
\end{remark}

\section{Proof of Theorem 2.2}\label{pf}
We will only provide a proof that $2n+1$ is $(c, n, E,\omega)$-regular, the argument for $2n$ being similar.
\begin{proof}
Let $\epsilon$ be small enough such that
  \begin{equation}\label{epsilon1}
    \epsilon<\min\{(\eta_0-\delta_0)/3,\nu\}.
  \end{equation}
Now let
  \[L:=e^{(\eta_0-\delta_0-\epsilon)}>1,\]
and note that since $S$ is bounded, by \eqref{sigma} we have there exists $ M>0$, such that
\[
|P_{[a,b],E,\omega}|<M^{(b-a+1)},\quad \forall E\in\sigma,\omega
\]
Pick $K$ big enough such that
  \[M^{\frac{1}{K}}<L\]
Let $\sigma>0$ be such that
\begin{equation}\label{K}
M^{\frac{1}{K}}\leq L-\sigma<L
\end{equation}
Let $\Omega_0=\Omega_1\cap\Omega_2\cap\Omega_3(\epsilon,K)$. Pick
$\tilde{\omega}\in\Omega_0$, and take $\tilde{E}$ a $g.e.$ for
$H_{\tilde{\omega}}$, with $\Psi$ the corresponding generalized eigenfunction.
Without loss of generality assume $\Psi(0)\neq 0$. Then there exists $ N_4$, such that for every $ n>N_4$, 0 is $(\gamma(\tilde{E})-8\epsilon_0,n,\tilde{E},\tilde{\omega})$-singular.


For $n>N_0=\max\{N_1(\tilde{\omega}),N_2(\tilde{\omega},\tilde{E}),N_3(\tilde{\omega}),N_4(\tilde{\omega},\tilde{E})\}$, assume $2n+1$ is $(\gamma(\tilde{E})-8\epsilon_0,n,\tilde{E},\tilde{\omega})$-singular.
 Then both $0$ and $2n+1$ is $(\gamma(\tilde{E})-8\epsilon_0,n,\tilde{E},\tilde{\omega})$-singular. So by Lemma \ref{lemma1},
  $\tilde{E}\in B_{[n+1,3n+1],\epsilon_0,\tilde{\omega}}^-\cup B_{[n+1,2n+1],\epsilon_0,\tilde{\omega}}^+ \cup B_{[2n+1,3n+1],\epsilon_0,\tilde{\omega}}^+ $.
  By Corollary \ref{omega2} and \eqref{B+}, $\tilde{E}\notin B_{[n+1,2n+1],\epsilon_0,\tilde{\omega}}^+\cup B_{[2n+1,3n+1],\epsilon_0,\tilde{\omega}}^+$, so it can only lie in $B_{[n+1,3n+1],\epsilon_0,\tilde{\omega}}^-$.

   Note that in \eqref{B-}, $P_{[n+1,3n+1],E,\tilde{\omega}}$ is a
   polynomial in $E$ that has $2n+1$ real zeros (eigenvalues of
   $H_{[n+1,3n+1],\tilde{\omega}}$), which are all in
   $B=B^{-}_{[n+1,3n+1],\epsilon,\tilde{\omega}}$. Thus $B$ consists
   of less than or equal to $2n+1$ intervals around the eigenvalues. $\tilde{E}$ should lie in one of them. By Lemma \ref{omega1}, $m(B)\leq Ce^{-(\eta_0-\delta_0)(2n+1)}$. So there is some e.v. $E_{j,[n+1,3n+1],\tilde{\omega}}$ of $H_{[n+1,3n+1],\omega}$ such that
   \[
   \vert\tilde{E}-E_{j,[n+1,3n+1],\tilde{\omega}}\vert\leq e^{-(\eta_0-\delta_0)(2n+1)}
   \]
  By the same argument, there exists $ E_{i,[-n,n],\tilde{\omega}}$, such that
   \[
   \vert\tilde{E}-E_{i,[-n,n],\tilde{\omega}}\vert\leq e^{-(\eta_0-\delta_0)(2n+1)}
   \]
   Thus $\vert E_{i,[-n,n],\tilde{\omega}}-E_{j,[n+1,3n+1],\tilde{\omega}}\vert\leq 2e^{-(\eta_0-\delta_0)(2n+1)}$. However, by Theorem \ref{omega3}, one has $E_{j,[n+1,3n+1],\tilde{\omega}}\notin B_{[-n,n],\epsilon,\tilde{\omega}}$, while $E_{i,[-n,n],\tilde{\omega}}\in B_{[-n,n],\epsilon,\tilde{\omega}}$
   This will give us a contradiction below.\\
Since $\vert E_{i,[-n,n],\tilde{\omega}}-E_{j,[n+1,3n+1],\tilde{\omega}}\vert\leq 2e^{-(\eta_0-\delta_0)(2n+1)}$ and $E_{i,[-n,n],\tilde{\omega}}$ is the e.v. of $H_{[-n,n],\tilde{\omega}}$,
\[
  \left\Vert G_{[-n,n],E_{j,[n+1,3n+1],\tilde{\omega}},\tilde{\omega}}\right\Vert\geq \frac{1}{2}e^{(\eta_0-\delta_0)(2n+1)}
\]
Thus there exist $ y_{1},y_{2}\in [-n,n]$ and such that
\[
  \left\vert G_{[-n,n],E_{j,[n+1,3n+1],\tilde{\omega}},\tilde{\omega}}(y_{1},y_{2})\right\vert\geq \frac{1}{2n}e^{(\eta_0-\delta_0)(2n+1)}
\]
Let $E_j=E_{j,[n+1,3n+1],\tilde{\omega}}$. We have $E_j\notin B_{[-n,n],\epsilon,\tilde{\omega}}$, thus
\[
\vert P_{[-n,n],\epsilon,E_j,\tilde{\omega}}\vert\geq e^{(\gamma(E_j)-\epsilon)(2n+1)}
\]
so by \eqref{A},
\begin{equation}\label{last}
  \left\Vert P_{[-n,y_{1}],\epsilon,E_jï¼Œ\tilde{\omega}}P_{[y_{2},n],\epsilon,E_j,\tilde{\omega}}\right\Vert\geq\frac{1}{2n}e^{(\eta_0-\delta_0)(2n+1)}e^{(\gamma(E_j)-\epsilon)(2n+1)}
\end{equation}
Then for the left hand side of \eqref{last}, there are three cases:
\begin{enumerate}
  \item both $|-n-y_{1}|>\frac{n}{K}$ and $|n-y_{2}|>\frac{n}{K}$
  \item one of them is large, say $|-n-y_{1}|>\frac{n}{K}$ while $|n-y_{2}|\leq\frac{n}{K}$
  \item both small.
\end{enumerate}

For $(1)$,
\[
\frac{1}{2n}e^{(\eta_0-\delta_0+\gamma(E_j)-\epsilon)(2n+1)}\leq e^{2n(\gamma(E_j)+\epsilon)}
\]
Since by our choice \eqref{epsilon1},
 $\eta_0-\delta_0+\gamma(E_j)-\epsilon>\gamma(E_j)+\epsilon$, for $n$ large enough, we get a contradiction.

For $(2)$,
\[
    \frac{1}{2n}e^{(\eta_0-\delta_0+\gamma(E_j)-\epsilon)(2n+1)}
    \leq e^{(\gamma(E_j)+\epsilon)(2n+1)}(M)^{\frac{n}{K}}
\]
is in contradiction with \eqref{epsilon1} and \eqref{K}

For $(3)$, with \eqref{epsilon1} and \eqref{K}
\[
  \frac{1}{2n}e^{(\eta_0-\delta_0+\gamma(E_j)-\epsilon)(2n+1)}\leq M^{\frac{2n}{K}}\leq (L-\sigma)^{2n}\leq(e^{(\eta_0-\delta_0+\gamma(E_j)-\epsilon)}-\sigma)^{2n},
\]
also a contradiction.

Thus our assumption that $2n+1$ is not  $(\gamma(\tilde{E})-8\epsilon_0,n,\tilde{E},\tilde{\omega})$-regular is false. Theorem \ref{thm2} follows.
\end{proof}

Note that we have established the following more precise version of
Theorem \ref{thm2}

\begin{thm}\label{thm22}
  There exists $ \Omega_0$ with $\mathbb{P}(\Omega_0)=1$, such that
  for every $ \tilde{\omega}\in\Omega_0$, for any g.e. $\tilde{E}$ of
  $H_{\tilde{\omega}}$, and $\epsilon>0,$ there exists $
  N=N(\tilde{E},\tilde{\omega}, \epsilon)$, such that for every $ n>N$, $2n,~2n+1$ are $(\gamma(E)-\epsilon,n,\tilde{E},\tilde{\omega})$-regular.
\end{thm}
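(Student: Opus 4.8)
The plan is to extract Theorem \ref{thm22} directly from the proof of Theorem \ref{thm2} given in Section \ref{pf}, exploiting the fact that that argument produces, for \emph{each} admissible value $\epsilon_0$, the regularity exponent $\gamma(\tilde E)-8\epsilon_0$ rather than an unspecified constant. The only genuinely new ingredient is to let $\epsilon_0\to 0$ along a countable sequence and intersect the resulting full-measure sets, so that a single $\Omega_0$ serves every $\epsilon>0$ at once.

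First, I would rerun Section \ref{pf} with $\epsilon_0$ left free in the range $(0,\nu/8)$ rather than fixed. For each such $\epsilon_0$ this determines the large-deviation rate $\eta_0=\eta(\epsilon_0)$, the auxiliary parameters $\delta_0$, the internal $\epsilon$, and $K$ used there, and hence a full-measure set $\Omega_0(\epsilon_0)=\Omega_1\cap\Omega_2\cap\Omega_3$ with the following property: for every $\tilde\omega\in\Omega_0(\epsilon_0)$ and every g.e. $\tilde E$ of $H_{\tilde\omega}$ there is $N=N(\tilde E,\tilde\omega,\epsilon_0)$ such that $2n$ and $2n+1$ are both $(\gamma(\tilde E)-8\epsilon_0,n,\tilde E,\tilde\omega)$-regular for all $n>N$. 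Note that $\gamma(\tilde E)-8\epsilon_0>\gamma(\tilde E)-\nu\ge 0$, so the exponent obtained is positive.

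Next, fix any sequence $\epsilon_0^{(k)}\downarrow 0$ with $\epsilon_0^{(k)}\in(0,\nu/8)$ and set $\Omega_0:=\bigcap_{k\ge 1}\Omega_0(\epsilon_0^{(k)})$; being a countable intersection of full-measure sets, $\mathbb{P}(\Omega_0)=1$. Given $\tilde\omega\in\Omega_0$, a g.e. $\tilde E$, and $\epsilon>0$, choose $k$ with $8\epsilon_0^{(k)}<\epsilon$ and put $N(\tilde E,\tilde\omega,\epsilon):=N(\tilde E,\tilde\omega,\epsilon_0^{(k)})$. Since $\tilde\omega\in\Omega_0(\epsilon_0^{(k)})$, for $n>N$ the points $2n,2n+1$ are $(\gamma(\tilde E)-8\epsilon_0^{(k)},n,\tilde E,\tilde\omega)$-regular; as the Green-function bound defining this, namely $\le e^{-(\gamma(\tilde E)-8\epsilon_0^{(k)})n}$, implies $\le e^{-(\gamma(\tilde E)-\epsilon)n}$, they are a fortiori $(\gamma(\tilde E)-\epsilon,n,\tilde E,\tilde\omega)$-regular, which is the assertion.

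I do not expect a substantial obstacle here, since the entire mathematical content sits in Theorem \ref{thm2}; the points to verify are routine. One should check that $\Omega_0(\epsilon_0)$ depends on $\epsilon_0$ only through finitely many numerical parameters (so a countable intersection is harmless), the trivial monotonicity of the regularity condition in its exponent, and the fact that the generalized eigenvalues of $H_{\tilde\omega}$ are determined by $\tilde\omega$ alone, so that once $\tilde\omega\in\Omega_0$ is fixed the conclusion is available for every g.e. and every index $k$ simultaneously. If anything is a ``hard part'', it is merely being careful that the construction of $\Omega_0$ genuinely precedes, and is independent of, the choice of $\epsilon$ in the statement.
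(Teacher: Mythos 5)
Your proposal is correct and is essentially the paper's own argument: the paper merely remarks that the proof of Theorem \ref{thm2} already yields the exponent $\gamma(\tilde E)-8\epsilon_0$ for each fixed $\epsilon_0\in(0,\nu/8)$, leaving implicit the routine step of intersecting the resulting full-measure sets over a countable sequence $\epsilon_0^{(k)}\downarrow 0$, which you spell out correctly (together with the trivial monotonicity of the regularity condition in the exponent).
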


It is a standard patching argument (e.g. proof of Theorem 3 in
\cite{j}) that this implies $|\Psi_E(n)|\leq C_{E,\epsilon}e^{-(\gamma(E)
  -\epsilon)n}$ for any $\epsilon>0.$ Combined with Theorem \ref{CS1},
this immediately implies that we have Lyapunov behavior at every
generalized eigenvalue.

\begin{thm}\label{CS2}
  For a.e. $\omega$  for all generalized eigenvalues $E$, we have
  \begin{equation}\label{ld}
   \lim_{n\to\infty} \frac{\log\Vert T_{[0,n],E,\omega}\Vert}{n+1}=\gamma(E)
  \end{equation}

\end{thm}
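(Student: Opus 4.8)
The plan is to derive Theorem \ref{CS2} by combining the sharp decay of generalized eigenfunctions that follows from Theorem \ref{thm22} with the Craig–Simon upper bound, Theorem \ref{CS1}. Fix $\omega$ in the full-measure set on which both Theorem \ref{thm22} and Theorem \ref{CS1} hold, and let $E$ be a generalized eigenvalue with generalized eigenfunction $\Psi$. The upper bound in \eqref{ld} is immediate: since the transfer matrix over $[0,n]$ is exactly $T_{[0,n],E,\omega}$ and the Lyapunov exponent governs its growth from above, Theorem \ref{CS1} gives $\varlimsup_{n\to\infty}\frac{\log\|T_{[0,n],E,\omega}\|}{n+1}\le\gamma(E)$. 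So the content is the matching lower bound $\varliminf_{n\to\infty}\frac{\log\|T_{[0,n],E,\omega}\|}{n+1}\ge\gamma(E)$.

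For the lower bound I would argue by contradiction with the exponential decay of $\Psi$. By Theorem \ref{thm22} and the standard patching argument quoted right after it, for every $\epsilon>0$ there is $C_{E,\epsilon}$ with $|\Psi(n)|\le C_{E,\epsilon}e^{-(\gamma(E)-\epsilon)|n|}$ for all $n$; in particular $\Psi\in\ell^2$, so $E$ is a genuine eigenvalue. The key point is that $\Psi$ cannot decay faster than the Lyapunov rate either: writing $v(n)=\binom{\Psi(n)}{\Psi(n-1)}$, one has $v(n)=T_{[0,n],E,\omega}v(0)$ with $v(0)\ne 0$ (WLOG, as in the proof of Theorem \ref{thm2}, after a shift we may assume $\Psi(0)\ne 0$ or $\Psi(-1)\ne0$). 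Suppose for contradiction that along a subsequence $n_k$ we had $\frac{1}{n_k+1}\log\|T_{[0,n_k],E,\omega}\|\le\gamma(E)-3\epsilon$. One then controls $\|v(n)\|$ for $n$ in a whole window past $n_k$ of length $\sim n_k$ by composing with a bounded number of one-step transfer matrices (each of operator norm at most some fixed $M$ since $S$ is compact and $E$ ranges in $\sigma$, cf.\ the bound $|P_{[a,b],E,\omega}|<M^{b-a+1}$ used in Section \ref{pf}): for $n_k\le n\le n_k(1+\delta)$ with $\delta$ small, $\|T_{[0,n],E,\omega}\|\le M^{\delta n_k}\|T_{[0,n_k],E,\omega}\|\le e^{(\gamma(E)-2\epsilon)(n+1)}$ once $\delta$ is chosen with $\delta\log M<\epsilon$. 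Hence $\sum_{n_k\le n\le n_k(1+\delta)}\|v(n)\|^2$ is exponentially small in $n_k$ relative to... — rather, cleanly: one shows $\liminf_n\frac1{n+1}\log\|v(n)\|\ge -\gamma(E)$ is impossible to violate on a density-one set of scales when $v$ solves the eigenvalue equation, because the transfer matrices have determinant of modulus one, so $\|T_{[0,n],E,\omega}\|\ge 1$ and more importantly $\|T_{[0,n],E,\omega}^{-1}\|=\|T_{[0,n],E,\omega}\|$; combining $v(0)=T_{[0,n],E,\omega}^{-1}v(n)$ gives $\|v(0)\|\le\|T_{[0,n],E,\omega}\|\,\|v(n)\|$, i.e.\ $\|v(n)\|\ge\|v(0)\|/\|T_{[0,n],E,\omega}\|$, which is the wrong direction. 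The correct route is the one used implicitly throughout the paper: the decay rate of $\Psi$ equals $\gamma(E)$ exactly, and since $|\Psi(n)|\le\|v(n+1)\|\le\|T_{[0,n+1],E,\omega}\|\,\|v(0)\|$, a lower bound $\frac1{n+1}\log\|T_{[0,n],E,\omega}\|\ge\gamma(E)-\epsilon$ would follow from a lower bound $\frac1{n}\log|\Psi(n)|\ge-(\gamma(E)-\epsilon/2)$ along all large $n$ — but $\Psi$ decays, so that fails. So instead I should bound $\|T_{[0,n],E,\omega}\|$ below directly: since $T_{[0,n],E,\omega}=T_{[m,n],E,\omega}T_{[0,m-1],E,\omega}$ has a column whose entries are the transfer-matrix entries over $[0,n]$, and the reverse transfer matrix $T_{[0,n],E,\omega}^{-1}$ propagates $v(n+1)$ back to $v(0)\ne0$, we get $\|T_{[0,n],E,\omega}\|=\|T_{[0,n],E,\omega}^{-1}\|\ge \|v(0)\|/\|v(n+1)\|$, which blows up exponentially precisely because $\Psi$ decays exponentially. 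Quantitatively, $\|v(n+1)\|\le 2C_{E,\epsilon}e^{-(\gamma(E)-\epsilon)n}$ gives $\|T_{[0,n],E,\omega}\|\ge c_{E,\epsilon}e^{(\gamma(E)-\epsilon)n}$, hence $\varliminf_n\frac1{n+1}\log\|T_{[0,n],E,\omega}\|\ge\gamma(E)-\epsilon$, and since $\epsilon>0$ is arbitrary, $\ge\gamma(E)$.

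Putting the two halves together yields $\lim_{n\to\infty}\frac1{n+1}\log\|T_{[0,n],E,\omega}\|=\gamma(E)$ for a.e.\ $\omega$ and every generalized eigenvalue $E$, which is \eqref{ld}. The main obstacle — and the only place requiring care — is making the lower bound uniform in $n$ rather than along a subsequence: one must use that the eigenfunction estimate from Theorem \ref{thm22} holds for \emph{all} large $n$ (not merely a subsequence), which it does since Theorem \ref{thm22} produces a single threshold $N(\tilde E,\tilde\omega,\epsilon)$ beyond which every $2n,2n+1$ is regular, and the patching argument then gives genuine exponential decay at rate $\gamma(E)-\epsilon$ at every site. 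With that in hand the inequality $\|T_{[0,n],E,\omega}\|\ge\|v(0)\|/\|v(n+1)\|$ is elementary from $\det T_{[0,n],E,\omega}=\pm1$, and no further input is needed.
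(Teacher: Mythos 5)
Your final argument is the one the paper intends: the patching-argument bound $|\Psi_E(n)|\le C_{E,\epsilon}e^{-(\gamma(E)-\epsilon)|n|}$ from Theorem~\ref{thm22}, combined with $\det T_{[0,n],E,\omega}=\pm1$ (so $\Vert T_{[0,n],E,\omega}^{-1}\Vert=\Vert T_{[0,n],E,\omega}\Vert$ and hence $\Vert T_{[0,n],E,\omega}\Vert\ge\Vert v(0)\Vert/\Vert v(n+1)\Vert\ge c_{E,\epsilon}e^{(\gamma(E)-\epsilon)n}$), gives the lower bound, and Theorem~\ref{CS1} gives the matching upper bound, so this is essentially the paper's (implicit) proof. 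The exposition should be trimmed of its false starts: you correctly derive $\Vert v(0)\Vert\le\Vert T_{[0,n],E,\omega}\Vert\,\Vert v(n)\Vert$ and then mislabel it ``the wrong direction'' before circling back to it; read it once as a lower bound on $\Vert T_{[0,n],E,\omega}\Vert$ and the detour through subsequences, windows of length $\sim\delta n_k$, and density-one scales can all be deleted.
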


\section{Uniform and Quantitative Craig-Simon}\label{uniformcs}
Craig-Simon theorem \ref{CS1}  implies that for a.e. $\omega$ and
every $E\in\sigma$ there exists $N(\omega,E)$ such that for $n>N,$
$\Vert T_{[0,n],E,\omega}\Vert\leq e^{(n+1)(\gamma(E)+\epsilon)}.$ For the proof
of dynamical localization one however needs a statement of this type
with $N$ uniform in $E.$ Such a statement is the goal of this
section. We will show that it holds for any ergodic dynamical system
satisfying the uniform LDT (Large Deviation Type) condition: Lemma \ref{ldt lemma}. Thus this
result has more general nature than the rest of the paper and may be
of independent interest. In particular, it is applicable to
quasiperiodic dynamics with Diophantine frequencies and analytic
sampling functions. We note that uniform LDT condition can also be
replaced by a combination of a pointwise LDT condition and continuity
of the Lyapunov exponent.

We have:
\begin{thm}\label{QCS}
  Let the ergodic family $H_\omega$ satisfy Lemma \ref{ldt lemma}. Fix
  $\epsilon_0>0$. For a.e. $\omega$ (we denote this set as
  $\Omega_2=\Omega_2(\epsilon_0)$), there exists $N_2(\omega)$, such
  that for any $n>N_2(\omega)$, $E\in \sigma $,
  \[
    |P_{[0,n],E,\omega}|\leq e^{(\gamma(E)+\epsilon_0)(n+1)}
  \]
\end{thm}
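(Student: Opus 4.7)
The plan is to combine Lemma~\ref{ldt lemma} with a discretization of $\sigma$ and Borel--Cantelli, exploiting the fact that $E\mapsto P_{[0,n],E,\omega}$ is a polynomial of degree $n+1$ (so its variation in $E$ is controlled by Bernstein/Markov-type inequalities) together with the continuity of $\gamma$ on $\sigma$, which follows from the uniformity of Lemma~\ref{ldt lemma}.

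Fix $\epsilon_0>0$ and let $\eta=\eta(\epsilon_0/4)$ from Lemma~\ref{ldt lemma}. Using continuity of $\gamma$, I first cover $\sigma$ by finitely many closed intervals $I_1,\ldots,I_L$ of some fixed small length $\rho>0$ with $|\gamma(E)-\gamma(E')|<\epsilon_0/4$ on each $I_\ell$. In each $I_\ell$ place a $\delta_n$-net $\{E_k^{(\ell)}\}$ with spacing $\delta_n:=n^{-3}$, giving $K_n=O(n^3)$ grid points in total. For each grid point, Lemma~\ref{ldt lemma} gives
\[
\mathbb{P}\bigl\{|P_{[0,n],E_k,\omega}|>e^{(\gamma(E_k)+\epsilon_0/4)(n+1)}\bigr\}\leq e^{-\eta(n+1)},
\]
and a union bound yields failure probability $\leq K_n e^{-\eta(n+1)}=O(n^3 e^{-\eta(n+1)})$, summable in $n$. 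Borel--Cantelli then produces $\Omega_2$ of full measure such that for every $\omega\in\Omega_2$ there is $N_2(\omega)$ with the above bound holding at every grid point for all $n>N_2(\omega)$.

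To pass from the grid to all of $\sigma$, note that $P_{[0,n],E,\omega}$ is a polynomial of degree $n+1$ with leading coefficient $\pm 1$. Markov's inequality on $I_\ell$ gives $\max_{I_\ell}|\partial_E P_{[0,n]}|\leq 2(n+1)^2/|I_\ell|\cdot\max_{I_\ell}|P_{[0,n]}|$. Plugging this into $|P_{[0,n],E}|\leq|P_{[0,n],E_k}|+\delta_n\max_{I_\ell}|\partial_EP_{[0,n]}|$ and taking the max over $E\in I_\ell$, a bootstrap (valid since $2\delta_n(n+1)^2/|I_\ell|<1/2$ for $n$ large) yields $\max_{I_\ell}|P_{[0,n],E,\omega}|\leq 2\max_{k:E_k\in I_\ell}|P_{[0,n],E_k,\omega}|$. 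Combining with the LDT bound at grid points and $|\gamma(E_k)-\gamma(E)|\leq\epsilon_0/4$ on $I_\ell$, we get $|P_{[0,n],E,\omega}|\leq 2\, e^{(\gamma(E)+\epsilon_0/2)(n+1)}\leq e^{(\gamma(E)+\epsilon_0)(n+1)}$ for $n$ large.

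\textbf{Main obstacle.} The main subtlety is that Markov's inequality controls the polynomial variation by $\max_I|P_n|$, not by the local typical size $e^{\gamma(E)(n+1)}$. A direct global Bernstein-type argument would only yield an $E$-uniform bound $|P_n|\leq e^{(\sup_\sigma\gamma+\epsilon_0)(n+1)}$, which is weaker than what the statement asks for. The resolution is to localize to sub-intervals $I_\ell$ of fixed length $\rho>0$ where $\gamma$ is nearly constant; it is here that the uniform (in $E$) character of Lemma~\ref{ldt lemma}---replaceable by pointwise LDT plus continuity of $\gamma$, as the authors note---plays an essential role.
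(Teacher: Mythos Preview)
Your argument is correct and follows the same overall architecture as the paper's proof: use continuity of $\gamma$ (a consequence of the uniform LDT) to partition $\sigma$ into finitely many fixed subintervals on which $\gamma$ is nearly constant, place polynomially many grid points in each, apply LDT plus a union bound and Borel--Cantelli to control $|P_{[0,n]}|$ at the grid, then use a polynomial-interpolation inequality to fill in.

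The one genuine difference is in the interpolation step. The paper places exactly $n$ Chebyshev-like nodes $E_{ki,n}=E_{k,n}+(1+\cos\frac{2\pi(i+\theta)}{n})\delta_0/2$ in each subinterval and invokes a Lagrange-interpolation lemma (their Lemma~5.3, proved via the sine-product estimate of \cite{avila2009ten}, Lemma~9.6) to get $\max_{I_k}|P_{[0,n]}|\leq Cn\max_i|P_{[0,n],E_{ki,n}}|$. You instead use an $n^{-3}$-net together with Markov's inequality and the bootstrap $M\leq G+\tfrac{2\delta_n(n+1)^2}{\rho}M\Rightarrow M\leq 2G$. Your route is more elementary and self-contained (no Chebyshev nodes, no sine-product lemma), at the harmless cost of $O(n^3)$ grid points rather than $O(n)$; the paper's route is slightly sharper in grid size and connects to a reusable interpolation lemma. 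Both identify the same ``main obstacle'' you flag---that a global Markov/Bernstein bound would only give $\sup_\sigma\gamma$---and both resolve it by the same localization to subintervals where $\gamma$ is essentially constant.
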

An immediate corollary is
\begin{cor}
Let $H_\omega,\epsilon_0$ be as above. Then there exists $\Omega_2$ with $\mathbb{P}(\Omega_2)=1$, such that for $\omega\in\Omega_2$, there exists $N_2(\omega)$ such that
  \[
  \max\left\{|P_{[0,n],E,\omega}|,|P_{[-n,0],E,\omega}|,|P_{[n+1,2n+1],E,\omega}|,|P_{[2n+1,3n+1],E,\omega}|\right\}\leq e^{(\gamma(E)+3\epsilon_0)(n+1)}.
  \]
\end{cor}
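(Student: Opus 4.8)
The plan is to read off the four bounds from Theorem~\ref{QCS}, applied to $\omega$ and to its translates $\omega^{(a)}$, defined by $(\omega^{(a)})_k=\omega_{k+a}$. Since $P_{[a,b],E,\omega}$ depends only on $(\omega_a,\dots,\omega_b)$ we have $P_{[a,b],E,\omega}=P_{[0,b-a],E,\omega^{(a)}}$, so the four quantities are exactly $|P_{[0,n],E,\omega^{(a_n)}}|$ with $a_n\in\{0,-n,n+1,2n+1\}$, and it suffices to produce a full-measure $\Omega_2$ and, for $\omega\in\Omega_2$, an $N_2(\omega)$ with $|P_{[0,n],E,\omega^{(a_n)}}|\le e^{(\gamma(E)+3\epsilon_0)(n+1)}$ for all $E\in\sigma$, all $n>N_2(\omega)$, and all four $a_n$. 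The one real point is that this is \emph{not} formally implied by the statement of Theorem~\ref{QCS}: that would need $N_2(\omega^{(a_n)})<n$ for all large $n$, i.e.\ a bound on how fast the threshold grows along the orbit, which the bare statement does not give. One therefore argues inside the proof of Theorem~\ref{QCS}, and the slack between $\epsilon_0$ and $3\epsilon_0$ is there to pay for this.

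First I would recall the mechanism of that proof. Fix a block length $\ell=\ell(\epsilon_0)$ so large that $\tfrac1\ell\,\mathbb E\big[\log\|T_{[0,\ell-1],E,\cdot}\|\big]<\gamma(E)+\tfrac{\epsilon_0}{10}$ for \emph{all} $E\in\sigma$ simultaneously; this uniform-in-$E$ choice is where the uniform LDT (Lemma~\ref{ldt lemma}) enters, via $\tfrac1L\,\mathbb E\log\|T_{[0,L-1],E,\cdot}\|\to\gamma(E)$ uniformly on $\sigma$. Partitioning an interval $[a,a+m]$ into length-$\ell$ blocks and averaging over the $\ell$ offsets, submultiplicativity of $\|T\|$, the inequality $|P_{[a,b]}|\le\|T_{[a,b]}\|$, and the a priori bound $\|T_{[c,c+\ell-1]}\|\le M_0^{\ell}$ give
\[
\log|P_{[a,a+m],E,\omega}|\le 2\ell\log M_0+\frac1\ell\sum_{j=a}^{a+m-\ell+1}\phi(E,\omega^{(j)}),\qquad \phi(E,\omega):=\log\|T_{[0,\ell-1],E,\omega}\|.
\]
The crucial feature is that $\phi(\cdot,\omega)$ is Lipschitz on $\sigma$ with a constant $L_\ell$ independent of $\omega$ (it is the logarithm of a degree-$\ell$ polynomial lying in $[1,M_0^{\ell}]$), hence so are the averages $\tfrac1m\sum_j\phi(\cdot,\omega^{(j)})$ and the mean $g_\ell(\cdot):=\mathbb E[\phi(\cdot,\cdot)]$. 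Thus a single \emph{finite} net $\{E_i\}\subset\sigma$ of mesh $<\epsilon_0/(100L_\ell)$ controls all $E$, and everything reduces to the finitely many averages $\tfrac1m\sum_j\phi(E_i,\omega^{(j)})$.

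Next I would run the ergodic theorem. For each net point $E_i$ and each of the four window types, $\tfrac1n\sum_{j=a_n}^{a_n+n-\ell+1}\phi(E_i,\omega^{(j)})$ is a fixed combination, with absolutely bounded coefficients, of the Cesàro averages $\tfrac1k\sum_{j=0}^{k-1}\phi(E_i,\omega^{(j)})$ (and, for $a_n=-n$, of $\tfrac1k\sum_{j=-k}^{-1}\phi(E_i,\omega^{(j)})$); by Birkhoff's theorem for the shift and its inverse each Cesàro average converges a.s.\ to $g_\ell(E_i)$, so each of these finitely many combinations does too. Intersecting the corresponding full-measure sets over the finitely many (net point, window type) pairs produces $\Omega_2$ with $\mathbb P(\Omega_2)=1$ and, for $\omega\in\Omega_2$, an $N_2(\omega)$ beyond which every such average is within $\tfrac{\ell\epsilon_0}{100}$ of $g_\ell(E_i)$. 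Inserting this, the net estimate, the choice of $\ell$, and $\gamma\ge 0$ into the displayed bound — after enlarging $N_2(\omega)$ to absorb the deterministic terms $2\ell\log M_0$ and $\gamma(E)\le\log M_0$ — yields $\log|P_{[0,n],E,\omega^{(a_n)}}|\le(n+1)(\gamma(E)+3\epsilon_0)$ for all $E\in\sigma$, all $n>N_2(\omega)$, and all four $a_n$; taking the maximum over the four choices is the Corollary.

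The step I expect to be the main obstacle is exactly the one flagged above: forcing a \emph{single} $N_2(\omega)$ to serve the four slowly growing windows at once, which is why one opens up the proof of Theorem~\ref{QCS} rather than cites it. Once opened, the only $\omega$-dependence sits in finitely many ergodic averages, whose four translated windows are treated together by Birkhoff; the bounded factor lost in recombining Cesàro averages over these windows is what the extra room in passing from $\epsilon_0$ to $3\epsilon_0$ absorbs. (If an application needs $\|T_{[a,b],E,\omega}\|$ rather than a single determinant, one combines this with $\|T_{[a,b]}\|\le |P_{[a,b]}|+|P_{[a+1,b]}|+|P_{[a,b-1]}|+|P_{[a+1,b-1]}|$, whose summands are $P$'s over windows of the same four types up to a shift and a length change by $1$.)
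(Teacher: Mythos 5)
Your proof is correct, but it takes a genuinely different route from the paper's. The paper calls this Corollary ``immediate'' from Theorem~\ref{QCS}, and the intended argument is to rerun that theorem's Borel--Cantelli step simultaneously for the four translated window families: shift-invariance of $\mathbb{P}$ makes the per-$n$ probability bound $ne^{-\eta_0(n+1)}$ the same for $[-n,0]$, $[n+1,2n+1]$, $[2n+1,3n+1]$ as for $[0,n]$, so Borel--Cantelli applied to the union of the four event families yields a single $N_2(\omega)$; this sidesteps exactly the issue you flag (that applying the bare conclusion of Theorem~\ref{QCS} to $T^{a_n}\omega$ would need $N_2(T^{a_n}\omega)<n$, an orbit-growth estimate the paper supplies only later, in Lemma~\ref{5.5}), without opening the proof beyond its last step. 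You instead reopen the machinery entirely and replace Lagrange interpolation (Lemma~\ref{elementary}) and Borel--Cantelli with a block decomposition at scale $\ell(\epsilon_0)$, an $\omega$-uniform Lipschitz estimate for $\phi(\cdot,\omega)=\log\|T_{[0,\ell-1],\cdot,\omega}\|$ (using $\|T\|\ge1$ from unit determinant) giving a fixed, $n$-independent finite net, and Birkhoff's theorem for forward and backward Ces\`aro averages, with the sliding windows handled as bounded combinations of Ces\`aro sums; the uniform choice of $\ell$ can also be gotten more simply from Dini's theorem applied to the decreasing continuous sequence $\frac1\ell\mathbb{E}\log\|T_{[0,\ell-1],E,\cdot}\|$. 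Your route is self-contained and applies to any ergodic system satisfying Lemma~\ref{ldt lemma}; what it gives up relative to the paper's is quantitative control of how $N_2$ grows along the orbit (Birkhoff alone gives no rate), which the paper later needs for Lemma~\ref{5.5} and dynamical localization but which is not required for the Corollary itself. Both arguments are complete proofs of the statement at hand.
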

 Thus we can replace Corollary \ref{omega2} with this uniform version.
\begin{proof}
  We start with the following
  %
  \begin{lemma}\label{elementary}
   Let $Q(x)$ be a polynomial of degree $n-1$. Let    $x_i=\cos{\frac{2\pi(i+\theta)}{n}}$, $0<\theta<1/2$, $i=1,2,\cdots,n$. If $Q(x_i)\leq a^n$, for all $i$, then $Q(x)\leq Cna^n$, for all $x\in[-1,1]$, where $C=C(\theta)$ is a constant.
  \end{lemma}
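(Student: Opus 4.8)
The plan is to run Lagrange interpolation at the $x_i$ and exploit the fact that, after normalization, these nodes are exactly the zeros of a shifted Chebyshev polynomial, which makes every relevant quantity explicit.

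Let $T_n$ be the $n$-th Chebyshev polynomial, normalized by $T_n(\cos\alpha)=\cos n\alpha$ (leading coefficient $2^{n-1}$). Since $\theta\in(0,1/2)$ we have $2\pi\theta\in(0,\pi)$, so $c:=\cos 2\pi\theta\in(-1,1)$ and $T_n(x)=c$ has exactly $n$ simple roots in $(-1,1)$; writing $x=\cos\alpha$ one checks these roots are precisely $x_i=\cos\alpha_i$, $\alpha_i=\tfrac{2\pi(i+\theta)}{n}$, $i=1,\dots,n$ (the families $\cos\tfrac{2\pi(k+\theta)}{n}$ and $\cos\tfrac{2\pi(k-\theta)}{n}$ coincide as sets). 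Hence $W(x):=\prod_{i=1}^n(x-x_i)=2^{1-n}\bigl(T_n(x)-\cos2\pi\theta\bigr)$, so $|W(x)|\le 2^{2-n}$ on $[-1,1]$, and differentiating $\cos n\alpha=T_n(\cos\alpha)$ (so $T_n'(\cos\alpha)=n\sin n\alpha/\sin\alpha$ and $\sin n\alpha_i=\sin2\pi\theta$) gives $|W'(x_i)|=2^{1-n}\,n\,|\sin2\pi\theta|/\sqrt{1-x_i^{2}}$. Since $\deg Q\le n-1$ and there are $n$ nodes, Lagrange interpolation gives $Q(x)=\sum_i Q(x_i)\ell_i(x)$ with $\ell_i(x)=W(x)\big/\bigl((x-x_i)W'(x_i)\bigr)$, so (in the application $Q$ is a square, so $Q(x_i)\le a^n$ means $|Q(x_i)|\le a^n$) it suffices to prove $\Lambda_n:=\max_{x\in[-1,1]}\sum_i|\ell_i(x)|\le C(\theta)\,n$.

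Fix $x\in[-1,1]$ and let $x_{i_0}$ be a node closest to $x$. For $i=i_0$, use $W(x_{i_0})=0$ and Markov's inequality $\|T_n'\|_{L^\infty[-1,1]}=n^2$ to get $|W(x)|\le 2^{1-n}n^2|x-x_{i_0}|$, hence $|\ell_{i_0}(x)|\le 2^{1-n}n^2/|W'(x_{i_0})|=n\sqrt{1-x_{i_0}^2}/|\sin2\pi\theta|\le n/|\sin2\pi\theta|$. For $i\ne i_0$, $|x-x_{i_0}|\le|x-x_i|$ gives $|x-x_i|\ge\tfrac12|x_{i_0}-x_i|$, and combining the bounds on $|W|$, $|W'(x_i)|$ with $\sqrt{1-x_i^2}=|\sin\alpha_i|\le|\sin\tfrac{\alpha_i-\alpha_{i_0}}{2}|+|\sin\tfrac{\alpha_i+\alpha_{i_0}}{2}|$ and $|x_{i_0}-x_i|=2|\sin\tfrac{\alpha_i+\alpha_{i_0}}{2}||\sin\tfrac{\alpha_i-\alpha_{i_0}}{2}|$ yields
\[
|\ell_i(x)|\le\frac{2}{n|\sin2\pi\theta|}\left(\frac{1}{|\sin\tfrac{\alpha_i-\alpha_{i_0}}{2}|}+\frac{1}{|\sin\tfrac{\alpha_i+\alpha_{i_0}}{2}|}\right).
\]
Here $\tfrac{\alpha_i-\alpha_{i_0}}{2}=\tfrac{\pi(i-i_0)}{n}$, and as $i$ ranges over $\{1,\dots,n\}\setminus\{i_0\}$ the residues $i-i_0\bmod n$ cover $\{1,\dots,n-1\}$; using $|\sin\tfrac{\pi k}{n}|\ge\tfrac2n\,\mathrm{dist}(k,n\mathbb Z)$ one gets $\sum_{i\ne i_0}|\sin\tfrac{\alpha_i-\alpha_{i_0}}{2}|^{-1}\le\tfrac n2\sum_{k=1}^{n-1}\min(k,n-k)^{-1}\le n(1+\log n)$. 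For the other term, $\tfrac{\alpha_i+\alpha_{i_0}}{2}=\tfrac{\pi(i+i_0+2\theta)}{n}$; since $i+i_0\in\mathbb Z$ and $2\theta\in(0,1)$, the number $i+i_0+2\theta$ lies at distance at least $d_\theta:=\min(2\theta,1-2\theta)>0$ from $\mathbb Z$, hence from $n\mathbb Z$, so $|\sin\tfrac{\alpha_i+\alpha_{i_0}}{2}|\ge\tfrac{2d_\theta}{n}$ and $\sum_{i\ne i_0}|\sin\tfrac{\alpha_i+\alpha_{i_0}}{2}|^{-1}\le n^2/(2d_\theta)$. Therefore $\sum_{i\ne i_0}|\ell_i(x)|\le\tfrac{2}{n|\sin2\pi\theta|}\bigl(n(1+\log n)+\tfrac{n^2}{2d_\theta}\bigr)\le C(\theta)n$, and adding $|\ell_{i_0}(x)|$ gives $\Lambda_n\le C(\theta)n$, whence $|Q(x)|\le a^n\Lambda_n\le C(\theta)\,n\,a^n$.

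The identification of the nodes and the Lagrange bound are routine; the part requiring care is the Lebesgue-function estimate, and within it the main obstacle is twofold: handling the singular term $i=i_0$ (where the crude bound $|W|\le 2^{2-n}$ is useless and must be replaced by the mean-value/Markov estimate), and controlling the ``sum'' term $|\sin\tfrac{\alpha_i+\alpha_{i_0}}{2}|^{-1}$, which is exactly what forces the final bound to be of order $n$ rather than $\log n$. One also has to keep track that the constant depends on $\theta$ only through $|\sin2\pi\theta|$ and $\min(2\theta,1-2\theta)$, both positive for $\theta\in(0,1/2)$; in particular $C(\theta)\to\infty$ as $\theta\to 0^+$ or $\theta\to(1/2)^-$, which is to be expected since the nodes degenerate there.
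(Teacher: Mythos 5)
Your proof is correct, and it takes a genuinely different route from the paper's. The paper expands each Lagrange basis factor $\prod_{j\neq i}\frac{x-x_j}{x_i-x_j}$ into products of sines via the identity $x_i-x_j=-2\sin\frac{\pi(i+j+2\theta)}{n}\sin\frac{\pi(i-j)}{n}$ and then invokes the non-elementary two-sided sum-of-logarithms estimate (Lemma~9.6 of \cite{avila2009ten}) to bound $\sum_{j\neq i}\ln|x_i-x_j|$ from below and $\sum_{j\neq i}\ln|x-x_j|$ from above. You instead recognize the nodes as the $n$ zeros of the shifted Chebyshev polynomial $T_n(x)-\cos 2\pi\theta$, which makes $W=\prod(x-x_i)$ and $W'(x_i)$ explicit, and then estimate the Lebesgue constant $\Lambda_n=\max_x\sum_i|\ell_i(x)|$ directly, with Markov's inequality handling the singular term $i=i_0$. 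This is more self-contained (no external lemma) and slightly sharper: it cleanly delivers $|Q(x)|\le C(\theta)\,n\,a^n$, whereas the paper's bound $\prod_{j\neq i}|(x-x_j)/(x_i-x_j)|\le Cn$ per index $i$, summed over $n$ indices, actually produces $Cn^2a^n$ (harmless for the application, where any polynomial prefactor is absorbed into $e^{\epsilon_0 n}$, but worth noting). Your treatment also covers the whole range $\theta\in(0,1/2)$ uniformly via $d_\theta=\min(2\theta,1-2\theta)$, while the paper's displayed estimate is written only for $\theta<1/4$. One small inaccuracy in your aside: in the paper's application $Q$ is a shifted characteristic polynomial $\det(H_{[0,n],\omega}-E)$, not a square; still, your underlying point stands that the hypothesis should read $|Q(x_i)|\le a^n$ and the conclusion $|Q(x)|\le Cna^n$, since with the one-sided hypothesis as literally stated the Lagrange argument (yours or the paper's) gives nothing.
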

  \begin{proof}
    By Lagrange interpolation, we have
    \begin{equation}
      Q(x)=\sum_{i=1}^{n}Q(x_i)\prod_{j\neq i}\frac{x-x_j}{x_i-x_j}
    \end{equation}
    Note that
    \[
    \sum_{j\neq i}\ln|x_i-x_j|=\sum_{j\neq i}\left\{\ln\left|\sin\frac{\pi(i+j+2\theta)}{n}\right|+\ln\left|\sin\frac{\pi(i-j)}{n}\right|+\ln 2\right\}=: A+B+(n-1)\ln 2.
    \]
    We will use the following lemma without giving a proof.
    \begin{lemma}[Lemma 9.6 in \cite{avila2009ten}]\label{9.6}
      Let $p$ and $q$ be relatively prime. Let $1\leq k_0\leq q$ be such that
      \[
      \vert \sin 2\pi(x+k_0p/(2q))\vert  =\min
      _{1\leq k\leq q}\vert \sin 2\pi(x+kp/(2q))\vert.
      \]
      Then
      \begin{equation}
        \ln q+\ln (2/\pi)<\sum_{\substack{k=1\\ k\neq k_0}}^{q}\ln\vert\sin 2\pi(x+kp/(2q))\vert+(q-1)\ln 2\leq \ln q.
      \end{equation}
    \end{lemma}

 For $B$, we take $p=1,~q=n,~x=-i/(2n),~k=j$. Then $k_0=i$, and we get
    \[
      B\geq \ln n+\ln (2/\pi)-(n-1)\ln 2.
    \]
For $A$, we estimate by Lemma \ref{9.6} with $p=1,~q=n,~x=(i+2\theta)/2n,~k=j$. If $k_0=j_0$ is the minimum term of $\ln|\sin\frac{\pi(i+j+2\theta)}{n}|$, then
    \[
      A\geq \ln n+\ln (2/\pi) -(n-1)\ln 2-\ln\left|\sin\frac{\pi(2i+2\theta)}{n}\right|+\ln\left|\sin\frac{\pi(i+j_0+2\theta)}{n}\right|
    \]
    For $0<\theta<1/4$, we have
    \[
    \frac{|\sin\frac{\pi(2i+2\theta)}{n}|}{|\sin\frac{\pi(i+j_0+2\theta)}{n}|}
    = \frac{|\sin\frac{\pi(2i+2\theta)}{n}|}{|\sin\frac{\pi\cdot 2\theta}{n}|}\leq \frac{1}{|\sin\frac{\pi\cdot 2\theta}{n}|}= O(n)
    \]
   Thus
    \[
      \sum_{j\neq i}\ln|x_i-x_j|\geq -(n-1)\ln 2+\ln n+C
    \]
    Writing $x=\cos\frac{2\pi a}{n}$, by Lemma \ref{9.6}, we get
    \[
      \sum_{j\neq i}\ln |x-x_j|\leq -(n-1)\ln 2+ 2\ln n+ C
    \]
    Thus
    \[
    \prod_{j\neq i}\frac{x-x_j}{x_i-x_j}\leq Cn
    \]
    and we have
    \[
    Q(x)\leq Cna^n
    \]
  \end{proof}
Now we can finish the proof of Theorem \ref{QCS}.

We know that $\sigma$ is compact, so contained in some bounded closed
interval. Assume we are dealing with $[a,a+A]$. Unifrom LDT implies
that $\gamma$ is a continuous function of $E$ \cite{DK}. Since $\gamma(E)$ is uniformly continuous,  for any $\epsilon_0$, there exists $\delta_0$ such that
\begin{equation}\label{holder}
|\gamma(E_x)-\gamma(E_y)|\leq \epsilon_0,\quad if~ |E_x-E_y|\leq \delta_0.
\end{equation}

Divide the interval $[a,a+A]$ into length-$\delta_0$ sub-intervals. There are $K=[A/\delta_0]+1$ of them (the last one may be shorter). Denote them as $I_k$, for $k=1,\cdots, K$. For $I_k=[E_{k,n},E_{k+1,n}]$, let $E_{k1,n},\cdots,E_{kn,n}$ be distributed as in Lemma 5.3. Namely, set $ E_{ki,n}=E_{k,n}+(x_i+1)\delta_0/2$,  where $x_i$ are as in Lemma 5.3, $0<\theta<1/2$. Note that for any $E_x$, $E_y\in [E_{k1,n},E_{kn,n}]$, $|\gamma(E_x)-\gamma(E_y)|\leq \epsilon_0$.
Since by the uniform-LDT condition
\[
\mathbb{P}\left(\left\{\omega:  \exists i=1,\cdots,n,~s.t.~|P_{[0,n],E_{ki,n},\omega}|\geq e^{(\gamma(E_{ki,n})+\epsilon_0)(n+1)} \right\}\right)\leq ne^{-\eta_0(n+1)},
\]
by Borel-Cantelli, for a.e. $\omega$, (we denote this set as $\Omega(k)$), there exists $N(k, \omega)$, such that for all $n>N(k,\omega)$,
\[
|P_{[0,n],E_{ki,n},\omega}|\leq e^{(\gamma(E_{ki,n})+\epsilon_0)(n+1)},\quad \forall i=1,\cdots,n.
\]
If we denote $\gamma_{k,n}=\inf_{E\in [E_{k1,n},E_{kn,n}]}{\gamma(E)}$, then by \eqref{holder}
\[
|P_{[0,n],E_{ki,n},\omega}|\leq e^{(\gamma(E_{ki,n})+\epsilon_0)(n+1)}\leq e^{(\gamma_{k,n}+2\epsilon_0)(n+1)}, \quad \forall i=1,\cdots,n.
\]

Let $M$ be big enough such that, for any $n>M$, $n^c\leq e^{\epsilon_0(n+1)}$. Thus by Lemma \ref{elementary}, applied to $Q(x)=P(E_{k,n}+\frac{(x+1)\delta_0}{2})$, for $E\in[E_{k,n},E_{k+1,n}]$, $n>\max\{N(k,\omega),M\}$,
\[
|P_{[0,n],E,\omega}|\leq n^ce^{(\gamma_{k,n}+2\epsilon_0)(n+1)}\leq n^ce^{(\gamma(E)+2\epsilon_0)(n+1)}\leq e^{(\gamma(E)+3\epsilon_0)(n+1)}
\]
Let $\Omega_2=\bigcap\limits_k \Omega(k)$,  $\tilde{N}(\omega)=\max_k\{N(k,\omega),M\}$. Then for any $n>\tilde{N}(\omega)$,
\[
|P_{[0,n],E,\omega}|\leq e^{(\gamma(E)+3\epsilon_0)(n+1)},\quad \forall E\in [a,a+A]
\]
\end{proof}
This allows us to also obtain a quantitative version of Theorem \ref{QCS}. Assume the $N_2(\omega)$ in Theorem \ref{QCS} is chosen to be the smallest satisfying the condition. Let $l\in\mathbb{Z}$, $N_2(l,\omega)=N_2(T^l\omega)$. Let $\bar{\Omega}_2=\bigcap_{l\in\mathbb{Z}}T^l\Omega_2$.
\begin{lemma}\label{5.5}
  For a.e. $\omega$ (we denote this set as $\tilde{\Omega}_2$), there exists $L_2=L_2(\omega)$, such that for all $\vert l\vert>L_2$, $N_2(l,\omega)\leq \ln^2\vert l\vert$. In particular, if $n>\ln^2 \vert l\vert$, then
  \[
  \vert P_{[l,l+n],E,\omega}\vert\leq e^{(\gamma(E)+\epsilon_0)(n+1)},~for~all~E\in\sigma
  \]
\end{lemma}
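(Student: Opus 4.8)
The plan is a Borel--Cantelli argument on the shifted samples $T^l\omega$, driven by a quantitative tail bound on $N_2(\omega)$ that is already essentially contained in the proof of Theorem \ref{QCS}. First I would reopen that proof to extract an explicit single-scale estimate: running it with $\epsilon_0$ replaced by $\epsilon_0/3$ (so that the $2\epsilon_0$ loss in the Lagrange interpolation step of Lemma \ref{elementary} lands exactly on $\epsilon_0$), there is a deterministic $M=M(\epsilon_0)$ such that for every integer $n>M$, off the finite event
\[
\mathcal{B}_n=\bigcup_{k,i}\big\{\omega:\ |P_{[0,n],E_{ki,n},\omega}|\ge e^{(\gamma(E_{ki,n})+\epsilon_0/3)(n+1)}\big\},
\]
whose probability is at most $Kn\,e^{-\eta_0(n+1)}$ with $K=[A/\delta_0]+1$ a fixed constant, one has $|P_{[0,n],E,\omega}|\le e^{(\gamma(E)+\epsilon_0)(n+1)}$ for all $E\in\sigma$. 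Hence $\{N_2(\omega)>m\}\subseteq\bigcup_{n>m}\mathcal{B}_n$ whenever $m\ge M$, so for all large $m$,
\[
\mathbb{P}\big(\{N_2(\omega)>m\}\big)\le\sum_{n>m}Kn\,e^{-\eta_0(n+1)}\le e^{-\frac{\eta_0}{2}m},
\]
the polynomial factor and the geometric tail being absorbed.

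Next, because $\mathbb{P}$ is $T$-invariant, $N_2(l,\omega)=N_2(T^l\omega)$ has the same law as $N_2(\omega)$ for every $l$. So for $|l|$ large enough that $\ln^2|l|>M$ and $\ln|l|>4/\eta_0$,
\[
\mathbb{P}\big(\{N_2(l,\omega)>\ln^2|l|\}\big)\le e^{-\frac{\eta_0}{2}\ln^2|l|}=|l|^{-\frac{\eta_0}{2}\ln|l|}\le|l|^{-2},
\]
which is summable over $l\in\mathbb{Z}$ (the finitely many small $|l|$ contribute at most one each). Borel--Cantelli then yields a full-measure set $\tilde\Omega_2$ on which the event $\{N_2(l,\omega)>\ln^2|l|\}$ occurs for only finitely many $l$; taking $L_2(\omega)$ to be the largest $|l|$ among these exceptional indices (and $L_2(\omega)=0$ if there are none) gives $N_2(l,\omega)\le\ln^2|l|$ for every $|l|>L_2(\omega)$. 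The ``in particular'' clause is then immediate: if $|l|>L_2(\omega)$ and $n>\ln^2|l|\ge N_2(l,\omega)=N_2(T^l\omega)$, Theorem \ref{QCS} applied to the ergodic sample $T^l\omega$ bounds its finite-volume determinant on $[0,n]$ --- which, by the shift covariance of the restricted operators and the definition of $N_2(l,\omega)$, is $P_{[l,l+n],E,\omega}$ --- by $e^{(\gamma(E)+\epsilon_0)(n+1)}$ for all $E\in\sigma$.

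The only genuine work is the first step: one must verify that in the proof of Theorem \ref{QCS} the probability of failure \emph{at a single scale $n$} is $O(n\,e^{-\eta_0 n})$ with a number of subintervals $K$ that does \emph{not} grow with $n$ --- which is the case, since $\delta_0$, hence $K$, depends only on $\epsilon_0$, and the sole $n$-dependence is the $\sim n$ interpolation nodes per subinterval together with the $e^{-\eta_0(n+1)}$ per node. Once that single-scale bound is in hand, Steps two and three are routine: a tail decaying like $e^{-c\ln^2|l|}$ is far more than summable in $l$, Borel--Cantelli applies verbatim, and identifying the shifted finite-volume determinant with $P_{[l,l+n],E,\omega}$ is a direct consequence of the definitions.
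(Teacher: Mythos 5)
Your proof is correct and follows the same overall Borel--Cantelli route as the paper, but it is in fact more careful than the paper's own argument at exactly the point where care is needed. The paper's proof writes the containment $\{N_2(l,\omega)=n\}\subseteq B^+_{[l,l+n-1],E}$ for a single fixed $E$ and then bounds its probability by $Ce^{-(\gamma(E)+\epsilon_0)n}$; as written this is not literally right, since the failure of the uniform bound at scale $n$ is a union over all $E\in\sigma$, and the exponent should come from the LDT rate $\eta_0$, not from $\gamma(E)+\epsilon_0$. You correctly identify that the right way to handle the $E$-uniformity is to reopen the proof of Theorem \ref{QCS}: the failure at scale $n$ is contained in the union $\mathcal{B}_n$ over the $\sim Kn$ Lagrange interpolation nodes from Lemma \ref{elementary}, with $K$ depending only on $\epsilon_0$ (via the modulus of continuity of $\gamma$) and not on $n$, and each node contributing $e^{-\eta_0(n+1)}$ by uniform LDT. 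Your rescaling $\epsilon_0\mapsto\epsilon_0/3$ is also the right adjustment, since Theorem \ref{QCS}'s proof lands on $3\epsilon_0$ after the interpolation and $n^c$ losses. The resulting tail bound $\mathbb{P}\{N_2(\omega)>m\}\lesssim e^{-\eta_0 m/2}$, the $T$-invariance step, the super-polynomial summability of $|l|^{-\frac{\eta_0}{2}\ln|l|}$, and the shift-covariance identification $P_{[0,n],E,T^{l}\omega}=P_{[\cdot,\cdot],E,\omega}$ are all handled correctly (up to an inconsequential sign convention on the shift, which the paper treats the same way). In short: same method, but you filled in a genuine gap in the paper's one-line tail estimate.
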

\begin{proof}
 Let $\omega\in\bar{\Omega}_2$, $l\in\mathbb{Z}$, $k\in\mathbb{N}$. By Theorem \ref{QCS}, $\bar{\Omega}$ has full measure. We have
 \begin{eqnarray*}
  \mathbb{P}\{\omega:N_2(l,\omega)\geq k\} & \leq &\sum_{n=k}^{\infty}\mathbb{P}\{\omega:N_2(l,\omega)=n\}\leq \sum_{n=k}^{\infty}\mathbb{P}(B^+_{[l,l+n-1],E})\\
  & \leq &\sum_{n=k}^{\infty}Ce^{-(\gamma(E)+\epsilon_0)n}\leq Ce^{-(\gamma(E)+\epsilon_0)k}
\end{eqnarray*}

  Thus \[\mathbb{P}\{\omega:N_2(l,\omega)\geq \ln^2\vert l\vert\}\leq Ce^{-(\gamma(E)+\epsilon_0)(\ln^2\vert l\vert)}\]
  By Borel-Cantelli lemma, we get the result and the corresponding $\tilde{\Omega}_2$.
\end{proof}

\section{Dynamical Localization}\label{dyn}
%

Now we have established the spectral localization for 1-d Anderson model. With some more effort, we can get the dynamical localization.
We say that $H_\omega$ exhibits dynamical localization if for $a.e.~\omega$, for any $\epsilon>0$, there exists $\alpha=\alpha(\omega)>0$, $C=C(\epsilon,\omega)$, such that for all $x,y\in\mathbb{Z}$:
  \[
    \sup_t |\langle\delta_x,e^{-itH_{\omega}}\delta_y\rangle|\leq C_\epsilon e^{\epsilon|y|}e^{-\alpha|x-y|}
  \]
According to \cite{del1996operators}, we only need to prove that for a.e. $\omega$, $H_\omega$ has SULE (Semi-Uniformly Localized Eigenfunction). We say $H$ has SULE if $H$ has a complete set $\{\varphi_E\}$ of orthonormal eigenfunctions, such that there is $\alpha>0$, and for each $\epsilon>0$, a $C_\epsilon$ such that for any eigenvalue $E$, there exists $l=l_E\in\mathbb{Z}$, such that
  \[
    |\varphi_E(x)|\leq C_\epsilon e^{\epsilon|l_E|}e^{-\alpha|x-l_E|},\quad x\in\mathbb{Z}
  \]
  In fact, we will prove that $\vert \varphi_E(x)\vert\leq C_\epsilon e^{C\ln^2(1+\vert l_E\vert)}e^{-\alpha\vert x-l_E\vert}$,  see
   \eqref{*1}, \eqref{*2}.
In order to do this, we need to modify Lemma \ref{omega1}, Lemma \ref{omega3} using the same method as in Lemma \ref{5.5}.
Assume the $N_i(\omega),~i=1,3$ in Lemmas \ref{omega1}, \ref{omega3} are chosen to be the smallest parameters satisfying the condition. Let $l\in\mathbb{Z}$, $N_i(l,\omega)=N_i(T^l\omega)$. Let $\bar{\Omega}_i=\bigcap_{l\in\mathbb{Z}}T^l\Omega_i$, $i=1,3$.
\begin{lemma}\label{lem}
  For a.e. $\omega$ (we denote this set as $\tilde{\Omega}_{1,3}$),  there are $L_1(\omega),L_3(\omega)$ such that for any $\vert l\vert>\max\{L_1,L_3\}$, \[\max\{N_1(l,\omega),N_3(l,\omega)\}\leq\ln^2\vert l\vert
\]
\end{lemma}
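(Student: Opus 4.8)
The plan is to mimic the quantitative-Craig-Simon argument of Lemma \ref{5.5} essentially verbatim, now applied to the exceptional sets coming from Lemmas \ref{omega1} and \ref{omega3} rather than from Theorem \ref{QCS}. The key observation is that in both of those lemmas the conclusion fails at scale $n$ for a translated operator $T^l\omega$ only if $\omega$ lies in an event of probability summable in $n$: for Lemma \ref{omega1} this is exactly the estimate \eqref{A5}, $\mathbb{P}(\Omega_{\delta_0,n,\pm}^c)\leq m(\sigma)e^{-\delta_0(2n+1)}$, and for Lemma \ref{omega3} it is \eqref{B5}, $\bar{\mathbb{P}}\leq (2n+1)^3\,2e^{-\eta_0(n/K+1)}$. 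Both right-hand sides decay faster than any polynomial in $n$, which is all that is needed.

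Concretely, I would first fix $i\in\{1,3\}$ and, for each $l\in\mathbb{Z}$, write $N_i(l,\omega)=N_i(T^l\omega)$ with $N_i(\cdot)$ taken minimal. Shift-invariance of $\mathbb{P}$ gives $\mathbb{P}\{\omega:N_i(l,\omega)\geq k\}=\mathbb{P}\{\omega:N_i(\omega)\geq k\}$, and bounding $\{N_i(\omega)\geq k\}$ by the union over $n\geq k$ of the scale-$n$ failure events and summing the geometric (resp. polynomially-weighted geometric) bounds above yields $\mathbb{P}\{\omega:N_i(l,\omega)\geq k\}\leq C e^{-c k}$ for constants $c>0$ depending on $\delta_0,\eta_0,K$ (in case $i=3$ the polynomial factor $(2k+1)^3$ is absorbed into a slightly smaller $c$). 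Setting $k=\ln^2|l|$ gives $\mathbb{P}\{\omega:N_i(l,\omega)\geq \ln^2|l|\}\leq C e^{-c\ln^2|l|}$, which is summable over $l\in\mathbb{Z}$ since $e^{-c\ln^2|l|}$ decays faster than $|l|^{-2}$. Borel–Cantelli then produces, for a.e. $\omega$, a threshold $L_i(\omega)$ beyond which $N_i(l,\omega)\leq\ln^2|l|$; intersecting the two full-measure sets (and with $\bar\Omega_i$) gives $\tilde\Omega_{1,3}$ and $L_1(\omega),L_3(\omega)$ as claimed. The statement follows with the threshold $\max\{L_1,L_3\}$.

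I don't expect any serious obstacle — the argument is a routine Borel–Cantelli bootstrap exactly parallel to Lemma \ref{5.5}. The only point requiring a little care is the summability over $l$: one must verify that $\sum_l e^{-c\ln^2|l|}<\infty$, which holds because for $|l|$ large $c\ln^2|l|\geq 2\ln|l|$, so the terms are eventually dominated by $|l|^{-2}$. A secondary bookkeeping point is that the probability bounds \eqref{A5} and \eqref{B5} were stated for the operator centered at $0$; here one simply applies them to the operator centered at $l$, i.e. to $T^l\omega$, which is legitimate by stationarity and is precisely the content of Remarks \ref{N1} and \ref{N3}. No use of subharmonicity, the Lyapunov exponent's continuity, or any new probabilistic input is needed.
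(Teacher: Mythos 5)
Your proposal is correct and follows essentially the same route as the paper: bound $\mathbb{P}\{N_i(l,\omega)\geq k\}$ by the tail sum of the scale-$n$ failure probabilities from \eqref{A5} and \eqref{B5} (using minimality of $N_i$ and stationarity of $\mathbb{P}$ under $T$), set $k=\ln^2|l|$, and apply Borel--Cantelli in $l$. The paper's proof is terser but identical in structure; you are slightly more explicit about the shift-invariance step and the summability of $\sum_l e^{-c\ln^2|l|}$, which are both left implicit in the paper.
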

\begin{proof}
  Let $\omega\in\bar{\Omega}_1$, $l\in\mathbb{Z}$,
   $k\in \mathbb{N}$, then by \eqref{A5}
  \[
  \mathbb{P}\{\omega:N_1(l,\omega)>k\}\leq \sum_{n= k}^\infty\mathbb{P}(\Omega_{\delta,n,\pm})\leq\sum_{n=k}^\infty 2m(\sigma)e^{-\delta_0(2n+1)}\leq Ce^{-\delta_0(2k+1)}
  \]
Thus
\[
\mathbb{P}\{\omega:N_1(l,\omega)>\ln^2 \vert l\vert\}\leq Ce^{-\delta_0(2\ln^2\vert l\vert+2)}
\]
By Borel-Cantelli lemma, we can get the result. The same argument works for $N_3$.
\end{proof}
Then we rebuild the criteria for regularity around a singular point $l$.
\begin{lemma}\label{last cor}
   For a.e. $\omega$ (we denote this set as $\tilde{\Omega}$), for any $l$, there exists $N(l,\omega)$,
   such that for any $n>N(l,\omega)$ and for all $E\in \sigma$
   either $l$ or $l+2n+1$, and either $l$ or $l-2n-1$ are $(\gamma(E)-8\epsilon_0,n,E,\omega)$-regular.
\end{lemma}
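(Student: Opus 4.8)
The plan is to run, in a shifted and quantitative form, exactly the same dichotomy argument that gave Theorem \ref{thm2}, but now \emph{uniformly in $E\in\sigma$} rather than only at a generalized eigenvalue. Fix $\epsilon,K$ as in Section \ref{pf}, and work on the intersection $\tilde\Omega:=\tilde\Omega_2\cap\tilde\Omega_{1,3}$ of the full-measure sets from Lemma \ref{5.5} and Lemma \ref{lem}; this still has full measure. For a given center $l$ and a given $E\in\sigma$, suppose that \emph{both} $l$ and $l+2n+1$ are $(\gamma(E)-8\epsilon_0,n,E,\omega)$-singular. By Lemma \ref{lemma1} applied at the two centers $l$ and $l+2n+1$ (shifting the whole setup by $l$), $E$ lies in one of the bad sets $B^-_{[l+n+1,l+3n+1],\epsilon_0,\omega}$, $B^+_{[l+n+1,l+2n+1],\epsilon_0,\omega}$, $B^+_{[l+2n+1,l+3n+1],\epsilon_0,\omega}$. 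The two $B^+$ alternatives are excluded by the uniform Craig--Simon bound, specifically by the Corollary to Theorem \ref{QCS} (the uniform replacement for Corollary \ref{omega2}), provided $n>\tilde N_2(T^l\omega)$; hence $E\in B^-_{[l+n+1,l+3n+1],\epsilon_0,\omega}$. Since $P_{[l+n+1,l+3n+1],E,\omega}$ is a degree-$(2n+1)$ polynomial in $E$ with all real roots, this set is a union of $\le 2n+1$ intervals, of total Lebesgue measure $\le Ce^{-(\eta_0-\delta_0)(2n+1)}$ by the shifted Lemma \ref{omega1} (valid for $n>N_1(l,\omega)$). So there is an eigenvalue $E_j$ of $H_{[l+n+1,l+3n+1],\omega}$ with $|E-E_j|\le e^{-(\eta_0-\delta_0)(2n+1)}$, and similarly, using that $l$ is singular together with the $[l-n,l+n]$ box, an eigenvalue $E_i$ of $H_{[l-n,l+n],\omega}$ with $|E-E_i|\le e^{-(\eta_0-\delta_0)(2n+1)}$, whence $E_i,E_j$ are $2e^{-(\eta_0-\delta_0)(2n+1)}$-close.

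From here the contradiction is produced exactly as in the proof of Theorem \ref{thm2}: proximity of $E_j$ to the eigenvalue $E_i$ of $H_{[l-n,l+n],\omega}$ forces $\|G_{[l-n,l+n],E_j,\omega}\|\gtrsim e^{(\eta_0-\delta_0)(2n+1)}$, so some matrix element is $\ge \frac{1}{2n}e^{(\eta_0-\delta_0)(2n+1)}$; meanwhile, by the shifted Lemma \ref{omega3}, $E_j\notin B_{[l-n,l+n],\epsilon,\omega}$, so by \eqref{A} we get (the shifted analogue of) \eqref{last}, and splitting according to whether the interval endpoints $y_1,y_2$ are far from or close to the ends of $[l-n,l+n]$ — the three cases (1)–(3) — together with \eqref{epsilon1} and \eqref{K} yields a contradiction for $n$ large. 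Thus for $n$ exceeding $N(l,\omega):=\max\{\tilde N_2(T^l\omega),N_1(l,\omega),N_3(l,\omega),\text{(an absolute }n_0)\}$, at least one of $l,l+2n+1$ is $(\gamma(E)-8\epsilon_0,n,E,\omega)$-regular; the statement for $l$ versus $l-2n-1$ follows by the mirror-image argument (using the boxes on the left), and one takes the max of the two thresholds.

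The one genuinely new point compared to Section \ref{pf} is \emph{uniformity in $E$}: in Theorem \ref{thm2} the bad-event probabilities only had to be summed over a single (random) energy, whereas here the conclusion must hold for all $E\in\sigma$ simultaneously at a common $n$-threshold $N(l,\omega)$. This is precisely what the uniform Craig--Simon theorem (Theorem \ref{QCS}) buys us for the two $B^+$ exclusions, and what Lemma \ref{omega1} and Lemma \ref{omega3} already give for the $B^-$-type statements — note that $\Omega_1$ and $\Omega_3$ were defined via the \emph{Lebesgue measure} of the bad energy sets $m(B^{-}_{\cdot,\omega})$ and via an \emph{all-eigenvalues} statement, so they are inherently $E$-uniform. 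Hence no new large deviation input is needed; one only has to check that all four ingredients can be quantified in $l$ via the $\ln^2|l|$ bounds of Lemmas \ref{5.5} and \ref{lem}, which is exactly why those lemmas were proved. The main obstacle, then, is purely bookkeeping: making sure that the single threshold $N(l,\omega)$ dominates every $n$-requirement appearing in cases (1)–(3) (which involve only the fixed constants $\epsilon_0,\epsilon,K,\sigma,M$ and so contribute an absolute $n_0$) as well as the three probabilistic thresholds, and that the $g.e.$-specific step in Section \ref{pf} — "$0$ is singular for large $n$" — is correctly replaced by the hypothesis of Lemma \ref{last cor}, namely that $l$ (resp.\ $l+2n+1$) is assumed singular. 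Once this is arranged, the estimate $|\varphi_E(x)|\le C_\epsilon e^{C\ln^2(1+|l_E|)}e^{-\alpha|x-l_E|}$ of \eqref{*1}–\eqref{*2} follows by the standard patching argument applied with center $l_E$.
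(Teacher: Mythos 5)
Your proposal is correct and takes essentially the same route as the paper: the paper's own proof of this lemma is a one-liner that simply cites the Section~4 argument (``either $0$ or $2n+1$ is regular for $n>N(\omega)=\max\{N_1,N_2,N_3\}$'') and then sets $N(l,\omega)=\max\{N(T^l\omega),N(T^{-l}\omega)\}$, with the $E$-uniformity supplied silently by the substitution ``replace Corollary~\ref{omega2} by the uniform version'' made just after Theorem~\ref{QCS}. You have unpacked exactly that: you rerun the two-singular-centers dichotomy in shifted boxes around $l$, explicitly note that the two $B^+$ exclusions now come from the uniform Craig--Simon corollary rather than from the $E$-dependent Corollary~\ref{omega2}, and observe that Lemmas~\ref{omega1} and~\ref{omega3} are already $E$-uniform because they control Lebesgue measure of bad energy sets and all eigenvalues respectively; the threshold bookkeeping via $N(l,\omega)$ and the $\ln^2|l|$ bounds of Lemmas~\ref{5.5} and~\ref{lem} then matches the paper. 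One minor improvement you make without comment: you correctly form $\tilde\Omega$ as an \emph{intersection} of the full-measure sets, where the paper's display just after the proof writes a union.
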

\begin{proof}
  In section 4, we proved that either $0$ or $2n+1$ is $(\gamma(E)-8\epsilon_0,n,E,\omega)$-singular for all  $n>N(\omega)$, with $N(\omega)=\max\{N_1(\omega),N_2(\omega),N_3(\omega)\}$. Here we set $N(l,\omega)=\max\{N(T^l\omega),N(T^{-l}\omega)\}$, and modify $\tilde{\Omega}$ accordingly.
\end{proof}

Now, take $\tilde{\Omega}=\tilde{\Omega}_2\cup\tilde{\Omega}_{1,3}$ and fix $\omega\in\tilde{\Omega}$. We omit $\omega$ from notations from now on.

By Lemma \ref{lem} and Lemma \ref{5.5}, there exist $L_1$, $L_2$, $L_3$ such that for all $|l|>\max\{L_1,L_2,L_3\}$,
  \[
    N_i(l)\leq \ln^2 |l|,\quad \forall i=1,2,3
  \]
  for all $E\in\sigma$.

Let $l_E$ be a position of the maximum point of $\varphi_E$.
Take $L_4$ with $\ln^2L_4\geq[\frac{\ln 2}{\gamma(E)-8\epsilon_0}]+1$. For any $n\geq \ln^2L_4$, and any e.v. $E$, $l_E$ is naturally $(\mu-8\epsilon_0,n,E)$-singular by \eqref{possion}.

Let $L=\max\{L_1, L_2, L_3, L_4\}$, $N(l):=\max\{N_1(l),N_2(l),N_3(l),\frac{\ln 2}{\gamma(E)-8\epsilon_0}\}$. Then for any $|l|>L$,
\begin{equation}\label{Nl}
  N(l)\leq \ln^2 |l|
\end{equation}

If $|l_E|>L$, then for any $n\geq N(l_E)$, $l_E$ is $(\gamma(E)-8\epsilon_0,n,E)$-singular, so $x=l_E\pm(2n+1)$ is $(\gamma(E)-8\epsilon_0,n,E)$-regular. By \eqref{possion}, for any $|x-l_E|\geq N(l_E)$
  \begin{equation}\label{n>}
    |\varphi_E(x)|\leq 2e^{-(\gamma(E)-8\epsilon_0)|x-l_E|}
  \end{equation}
Since $\varphi_E$ is normalized, in fact for all $x$,
\[
  |\varphi_E(x)|\leq 2e^{(\gamma(E)-8\epsilon_0)N(l_E)}e^{-(\gamma(E)-8\epsilon_0)|x-l_E|}
\]
By \eqref{Nl}, for any $\epsilon$,
\begin{equation}\label{*1}
  |\varphi_E(x)|\leq 2e^{(\gamma(E)-8\epsilon_0)\ln^2 (1+|l_E|)} e^{-(\gamma(E)-8\epsilon_0)|x-l_E|}
\end{equation}

If $|l_E|\leq L$, for any $\epsilon$, for $n\geq N(l_E)$, we use the same argument as \eqref{n>} and get
\begin{equation}\label{*3}
  |\varphi_E(x)|\leq 2e^{-(\gamma(E) -8\epsilon_0)|x-l_E|}\leq 2e^{\epsilon \ln^2 (1+|l_E|)}e^{-(\gamma(E)-8\epsilon_0)|x-l_E|}
\end{equation}
 While for $n\leq N_{l_E}$, set $M_{2\epsilon}=\min_{k\in[-L,L],~|x-k|<N(k)}
 \{e^{\epsilon \ln^2(1+|k|)} e^{-(\gamma(E)-8\epsilon_0)|x-k|}\}$ and  $C_{2\epsilon}=M_{2\epsilon}^{-1}$. Then for all  $|x-l_E|< N(l_E)$,
\begin{equation}\label{*2}
  |\varphi_E(x)|\leq 1\leq C_{2\epsilon} e^{\epsilon \ln^2 (1+|l_E|)} e^{-(\gamma(E)-8\epsilon_0)|x-l_E|}
\end{equation}

Thus for $C_\epsilon=\max\{2,C_{2\epsilon}\}$, \eqref{*1} \eqref{*3}
and \eqref{*2} provide SULE. \qed

 %



\section*{Acknowledgments}
This research was partially supported by the NSF DMS-1401204. X. Z. is grateful to Wencai Liu for inspiring thoughts and comments for Sec. 5. We also thank Barry Simon for his encouragement.


\bibliographystyle{unsrt}
\bibliography{mybib1}

\end{document}